\documentclass[11pt, letterpaper]{article}

\usepackage{titletoc}
\usepackage[toc, page, header]{appendix} 

\usepackage[margin=1in]{geometry}
\usepackage{amsmath, amssymb, amsthm,color}
\usepackage{thmtools}
\usepackage{graphicx}
\usepackage[colorlinks=true, linkcolor=blue!60!black, citecolor=blue!60!black,
urlcolor=black]{hyperref}
\usepackage{xcolor}
\usepackage{titlesec}
\usepackage{enumitem}
\usepackage[semicolon,round]{natbib}
\usepackage[normalem]{ulem}
\usepackage{bm}
\usepackage{xspace}

\usepackage{fancyhdr}
\usepackage{algorithm}
\usepackage[noEnd,commentColor=blue!70!white]{algpseudocodex}

\makeatletter
\newcommand{\algeq}[2][]{%
  \par\vspace{5pt}%
  \noindent%
  \makebox[\columnwidth]{%
    \hfill #2%
    \if\relax\detokenize{#1}\relax
      \hfill
    \else
      \hfill (\refstepcounter{equation}\theequation)\label{#1}%
    \fi
  }%
  \par\vspace{5pt}%
}
\makeatother

\usepackage[noabbrev,sort]{cleveref}
\AddToHook{cmd/appendix/before}{%
    \crefalias{section}{appendix}%
    \crefalias{subsection}{appendix}
}

\Crefname{ALC@line}{Line}{Lines}
\Crefname{mechanism}{Mechanism}{Mechanisms}

\newtheorem{theorem}{Theorem}
\newtheorem{proposition}[theorem]{Proposition}
\newtheorem{lemma}[theorem]{Lemma}

\theoremstyle{definition}
\newtheorem{definition}{Definition}

\newtheorem{remark}[definition]{Remark}

\usepackage{tikz}  \usetikzlibrary{shapes, arrows.meta, positioning, fit, backgrounds} 

\usepackage{comment}
\newcommand{\E}{\operatornamewithlimits{\mathbb{E}}}
\newcommand{\argmax}{\operatornamewithlimits{\text{argmax}}}
\newcommand{\Var}{\operatorname{Var}}
\renewcommand{\paragraph}[1]{\noindent\textbf{#1}}

\newcommand{\gai}{AI\xspace}

\title{Market Design for \gai: Beyond the Copyright Binary
}
\author{Yan Dai\footnote{MIT Operations Research Center. \texttt{yandai20@mit.edu}.} \qquad Maryam Farboodi\footnote{Cornell, NBER and CEPR. \texttt{m.farboodi@gmail.com}.} \qquad Negin Golrezaei\footnote{MIT Sloan School of Management. \texttt{golrezae@mit.edu}.} \qquad Sepehr Shahshahani\footnote{Washington University School of Law. \texttt{sepehrs@wustl.edu}.}}
\date{First version: Feb 2026. This version: Jul 2026.\footnote{We thank Xavier Gabaix, Mark Lemley, Ilan Morgenstern, Lisa Larrimore Oullette, Eva Tardos, and Yifan Wu, as well as the participants at Wharton Accountable AI, Stanford Market Design in the Age of AI, Marketplace Innovation Workshop, IP Researchers Europe, Society for Economic Research on Copyright Issues, Society for Institutional and Organizational Economics, ACM Economics and Computation (EC), and INFORMS M\&SOM conferences for helpful conversations and comments.}}

\begin{document}

\maketitle

\begin{abstract}
\noindent 
How can we design a market of human-generated content for use in training AI models that both enables technological progress and preserves individual incentives for high-quality content creation?
Existing approaches take polar positions: a “free-for-all” model based on fair use and a “strong intellectual property rights” model. 
We show that both fail: Free-for-all does not compensate creators, and---by modeling as a static Stackelberg game---strong intellectual property rights also underpower creative incentives. We find this especially true for more innovative creators, a phenomenon we term the “originality penalty.”
Extending this insight to a dynamic model, we find another market failure undermining AI model performance, even for an initially good model: Such a model induces greater reliance by humans on AI-assisted creation, resulting in homogenized content feeding back into training, which degrades the model performance---a “curse of precision.” 
We further propose a market design with a data intermediary { negotiating collectively with the AI firm} and subsidizing innovative contributions, thus restoring efficiency. 
\end{abstract}

\section{Introduction}

Recent breakthroughs in generative artificial intelligence (\gai ) are laden with both promise and peril. 
By enabling the fast, cheap delivery of useful content, \gai  models have improved work and life for millions of people. 
But \gai models' ability to generate this useful content depends on a rich reservoir of human-created content on which to train. 
To date, AI firms have acquired this content by scraping it from online sources—largely without human creators' consent, compensation, or credit.
While this practice has enabled rapid \gai progress, it undermines human creators' incentives to create.
We are already seeing warning signs: Stack Overflow’s traffic declined significantly after the rise of Large Language Models (LLMs) like ChatGPT, as users consumed \gai answers rather than contributing to producing them \citep{burtch2024consequences,del2024large,sharma2025beyond}. 
{ A recent field experiment documents the same dynamic for Google's AI Overviews, which divert traffic away from publishers whose content is summarized \citep{agarwal2026google}.}
Various artists and content creators have sued AI firms contending that their practices have deprived them of the rewards of their creative work (\Cref{sec:literature}).
These trends point to a misalignment between the incentives of \gai developers and human creators, and they give rise to the central research question of this project: 
How can we design a market of human-generated content for \gai that both enables technological progress and preserves individuals' incentive to create high-quality content?

This is among the foremost policy challenges of our times, implicating in its narrow berth important questions of innovation and technology policy and, more broadly, fundamental concerns about how to preserve human creative flourishing.
But the tools we have come from a decades-old Copyright Act that was not designed with anything like \gai  technology in mind.
As such, the current debate is trapped in a legal binary: \gai  harvesting is either fair use (what we call a ``free for all'' system) or copyright infringement (a ``strict intellectual property rights'' regime). 
This paper argues that both extremes are destined to fail and points instead to a ``third way'' via data intermediaries.

The reason a free-for-all system fails is easy to grasp: 
If AI can freely take human-created content and use it to generate content that largely replaces consumers' need for the human-created content, then humans will be left with much-diminished incentives to invest in high-quality content creation. 
But we show that the opposite solution---strict intellectual property rights---is also doomed to fail.

The reasons for this failure are subtle and unexpected. We begin with a static Stackelberg model capturing the interaction between human creators and the AI firm.
First, contrary to conventional wisdom about the effects of intellectual property (IP) rights, we show that {strong individualistic IP rights result in under-powered creative incentives}. 
{ This is driven by the market power of AI firms: a buyer with monopsony power can profitably price content below what would be required to induce socially optimal effort on the part of human creators. 
Correlation among different human creators' content---the fact that, by selling her own content, a creator is effectively selling ``part of'' other creators'---does not by itself widen this shortfall, but, as we discuss next, it determines which creators bear it.} 
Second---and more surprisingly---we show that AI firm's preference for representative data creates an ``originality penalty,'' meaning the market degrades the incentives of innovative content creators to a greater extent (relative to social optimum) than it degrades the incentives of run-of-the-mill creators who are highly influenced by common trends and biases. 
This originality penalty thus leads to the production of homogenized content.
Although it is predictable that AI firms' market power may lead to under-investment by human creators, our finding of an originality penalty contradicts the economic intuition that scarcity demands a premium, revealing a market failure that is distinctive to the \gai context. 

Beyond the short-run tradeoff between human creative incentives and technological progress, we further extend our analysis to a continuous-time dynamic setting in which an AI firm trains and refines its models on human-generated content over time, and humans in turn use AI tools to assist them generating content.
In this dynamic setting, we identify a ``curse of precision'' whereby a good model induces greater reliance by human content creators on \gai outputs, leading to a homogenization of human content, which feed back into the \gai training pipeline, degrading model performance.
The unexpected consequence is that a good model sows the seeds of its own decay.
While this ``curse of precision'' looks similar to the phenomenon of ``model collapse'' identified in the empirical machine learning literature \citep{alemohammad2023self,shumailov2023curse}---when trained on self-generated data, an \gai model's capability collapses---our analysis, however, reveals a different pathway to failure and a more pessimistic prediction: 
Even if the AI firm constantly acquires {fresh data} from human creators, poorly designed markets can lead to \gai model failure.

Having identified the static and dynamic properties of market equilibrium under existing institutional designs, we then move beyond diagnosis to treatment by proposing a better market design. 
Our proposed solution takes the form of a {data intermediary} who 
negotiates on behalf of individual content owners with an AI firm and apportions royalties among them. 
We show that, given full information, the data intermediary can overcome both market failures that characterize the individual-IP-rights regime.  
By managing creators as a portfolio and negotiating collectively on their behalf, the intermediary overcomes the AI firm's market power, { securing creators a share of the social surplus their content generates}. 
And by apportioning the payment received from the AI firm among individual creators according to their respective contributions to social surplus (using weights that we define inspired by the Aumann--Shapley value), the intermediary compensates the most original creators in proportion to their contributions---rather than marking them down the most, as the unintermediated market does through the originality penalty. 
We show, interestingly, that efficiency can be restored using a simple {two-part tariff} mechanism where both the lump-sum transfer that the intermediary receives from the AI firm and the amounts that the intermediary pays to individual content creators are affine functions of the effort exerted by content creators. 

Our intermediary solution abstracts away from informational asymmetries and, more generally, from principal-agent problems between the intermediary and human creators. 
As such, it should not be viewed as a definitive policy solution. 
The idea is nevertheless quite useful as a first step toward an optimal market design because any problems that characterize the full-information context will also plague an incomplete-information environment. 
The market failures we identify will exist even in the absence of information gaps or principal-agent costs, and they are the very failures addressed by our proposed market design, so the solution is well-calibrated to the problem as defined and serves as a useful benchmark for a definitive solution. 

The paper is organized as follows.
The next section explains the legal-policy background and situates our contribution in the literature. 
\Cref{sec:model} develops and solves a static model that uncovers the shortcomings of the existing legal regime and commonly proposed solutions. \Cref{sec:dynamic} extends the model to a continuous-time dynamic case, identifying a long-run market failure. 
Building on these results, \Cref{sec:mechanism} proposes an institutional design that overcomes these shortcomings and outperforms commonly proposed solutions. 
The Appendix furnishes formal statements and proofs.

\section{Policy Debate and Literature}\label{sec:literature}

The question of how best to resolve AI's disruption of the market for creating and consuming informational content is hotly debated not just in academic literature but also in courts {and among legislators \citep{lucchi2026impact}}.
The most consequential legal questions affecting the future of \gai  have been raised in several lawsuits filed by content owners against AI firms alleging copyright infringement.\footnote{A useful list appears at \href{https://chatgptiseatingtheworld.com/2024/08/27/master-list-of-lawsuits-v-ai-chatgpt-openai-microsoft-meta-midjourney-other-ai-cos/}{https://chatgptiseatingtheworld.com/2024/08/27/master-list-of-lawsuits-v-ai-chatgpt-openai-microsoft-meta-midjourney-other-ai-cos/}.
For a non-lawyer-friendly introduction, see \citet{samuelson2023,samuelson2025}.
}  
To a layperson, the juxtaposition of copyright with the cutting edge technology may appear unusual---after all, copyright is thought to be about artistic creativity, whereas the high-technology context might seem better suited to patent law or other regulatory regimes---but in fact copyright law has long been at the forefront of regulating new technologies in the United States. 
From the piano roll to cable to VCRs to peer-to-peer file sharing to online video streaming and beyond, the first steps in deciding the fate of emerging technologies have often been taken by courts in copyright cases \citep{shahshahani2018, samuelson2024}. 
To appreciate the reasons for this, it is essential to review the fundamental structure of American IP laws, of which copyright is a central part. 

The goal of American IP laws, in the words of the U.S. Constitution, is “to promote the progress of science and useful arts, by securing for limited times to authors and inventors the exclusive right to
their respective writings and discoveries” (U.S. Const. art. I, \S\,8, cl. 8). 
The reason such an ``exclusive right''---in other words, a temporary monopoly---has been considered necessary to achieve the objective of progress in arts and science traces to the ``public good'' nature of intangible products of the mind: 
Unlike tangible goods (say, a chair), intangible products are ``nonrivalrous''---one person's use does not diminish another person's use---and ``nonexclusive''---the use of an idea once disclosed is hard to limit to the first recipient. 
As such, the theory of American IP laws is that exclusive rights are necessary to prevent freeriding on others' creativity and thus maintain incentives to create.  
But the theory also acknowledges that these exclusive rights come at a cost: 
Like any monopoly, IP rights raise the cost of a work protected by IP and make it more difficult for users to access. 
These include not only costs to end users, such as a person who wants to read a copyrighted novel or take a patented medicine, but also---of special importance in the \gai  context---costs to follow-on creators who want to use existing content as building blocks for creating new content \citep{freilichshahshahani2023, shahshahani2025}. 
Balancing creative incentives against access costs---the {incentives-access tradeoff}---is thus at the core of American intellectual property law. 
This has been the dominant conceptual framework for understanding IP in economics for a long time \citep[e.g.,][]{plant1934,machlup1958,arrow1962economic,nordhaus1969} and in US law for an even longer time (e.g., \textit{Wheaton v. Peters}, 33 U.S. 591, 657–58, 661 (1834); \textit{Kendall v. Winsor}, 62 U.S. 322, 327–29 (1858)); \textit{Sears, Roebuck \& Co. v. Stiffel Co.}, 376 U.S. 225, 229–31 (1964); \textit{Twentieth Century Music Corp. v. Aiken}, 422 U.S. 151, 156 (1975)).

The incentives-versus-access paradigm helps explain the current array of scholarly opinion on the question of whether the unauthorized use of copyrighted content in training \gai  models should be permitted.\footnote{
That is not the only copyright-law question implicated by \gai. 
Grossly simplified, there are least three major legal questions: 
(1) Does the use of copyrighted material in an \gai  system’s training set infringe the copyright in the material?  
(2)	Does the output of an \gai model infringe the copyright in works in its training set or other works?  
In this connection, does the substantial-similarity standard differ from the standard that is applied when considering non-\gai cases of copyright infringement? 
(3) Is output produced by someone using \gai copyrightable? 
If so, who is the author (initial owner)? 
It is the first of these questions that is most directly applicable to this paper, though the second question also potentially comes into play.
For analysis of these and other questions related to copyright and \gai, see 
\cite{henderson2023, lemley2024, lemleyoullette2025}.
} 
On one side, scholars have argued that AI firms' unrestricted use of copyrighted works in training \gai models undermines individuals' incentives to create high-quality content, so robust IP rights are needed to preserve creative incentives and promote 
 the constitutional goal of artistic and scientific progress \citep{opderbeck2024,barnett2025}.
 We call this position the ``strict intellectual property rights'' model.
On the other side, scholars have argued that unrestricted access to data is essential to innovation, concluding that the use of copyrighted works in \gai training constitutes ``fair use''\footnote{Fair use is a longstanding judge-made doctrine, now codified in the Copyright Act, 17 U.S.C. \S\,107, which holds that certain uses of copyrighted works, though otherwise impinging on the exclusive rights of the copyright owner, qualify as noninfringing in light of the purpose and character of the use, the nature of the copyrighted work, the amount and substantiality of the material used, and the effect of the use on the market for the original work. 
The two main theoretical justifications for the fair use doctrine are that it cures market failure in situations where the copyright owner is unlikely to license a valuable use \citep{gordon1982} and that it encourages ``transformative'' uses whereby a second-generation creator uses a copyrighted work as raw material in the creation of new information, insights, or aesthetics \citep{leval1990}.  
While these academic justifications are useful and influential, the doctrine is highly fact-dependent and case results cannot be confidently predicted based only on academic theory.} and is not a copyright infringement \citep{sag2019new,lemleycasey2021,lee2025}. 
We call this the ``free for all'' model. 

Our first insight is that these polar ways of thinking about the problem fall short. 
It is easy to see why the free-for-all model will not work: 
If content can be taken without rewarding its creator, people will have little incentive to invest resources into creating content, so the amount and quality of human-created content will decrease. 
That is the very freerider problem that American copyright and patent laws are designed to ameliorate. 
The problem will be especially pronounced as consumers increasingly turn to \gai platforms, rather than the original source, to get the content they want.

Admittedly, this analysis does not do full justice to the concept of fair use, which is flexible and capable of adaptation to a variety of factual circumstances.
The doctrine does not require that {all} uses of copyrighted work for training \gai models be deemed noninfringing (though some commentators have advocated just that); depending on the circumstances, some uses may be shielded by fair use while others are held to be copyright infringement. 
For example, in a case involving Anthropic, the firm that developed the large language model Claude, a court recently held that the unauthorized copying and use of copyrighted books to train the model qualified as fair use with respect to books that Anthropic purchased but not with respect to books that it downloaded for free from ``pirate'' websites (\textit{Bartz v. Anthropic PBC}, 787 F. Supp. 3d 1007, 1033 (N.D. Cal. 2025)). 
Whether particular \gai uses qualify as fair use could thus be determined case by case in light of particular factual circumstances, as the recent Copyright Office Report on artificial intelligence predicts \citep[74]{copyrightoffice2025}. 
While such flexibility can be a virtue, it also breeds uncertainty and unpredictability, a longstanding criticism of fair use doctrine (see \cite{shahshahani2015} for a critical survey). 
For example, on facts very similar to the \textit{Anthropic} case, a different court held that Meta's use of copyrighted books downloaded from pirate websites to train the Llama models \textit{was} fair use, but took pains to point out that the decision was based on the record before it and could well have gone the other way if the plaintiffs had developed a better factual record (\textit{Kadrey v. Meta Platforms, Inc.}, 788 F. Supp. 3d 1026, 1036–37 (N.D. Cal. 2025)).
But if the legality of use will depend on each case's particular facts and circumstances, or on the judge presiding over the case, then both \gai developers and individual creators are left without much \textit{ex ante} guidance about whether AI firms can freely take human creators' content.
So, whereas a blanket fair use privilege results in under-powered creative incentives, a highly fact-specific doctrine leads to uncertainty and potential chilling effects, particularly for risk-averse tech firms and content creators. 

What about the strict intellectual property rights model? 
Proponents of this model rely on the Coase Theorem \citep{coase1960} to argue that if property rights are clearly assigned to creators, then creators and AI firms will negotiate a price and efficient data transfers or licenses will occur. 
The standard counterargument in the literature is that {transaction costs} of different flavors---the logistical difficulty of an \gai company negotiating with millions of individual creators, overlapping or unclear ownership of IP rights, cultural and institutional barriers to mutual understanding, divergent estimates of value, and suchlike---may stand in the way of an efficient bargain \citep{ostrom1990,heller1998tragedy,heller1998}.

That is \textit{not} why the strict intellectual property rights model fails in our theoretical framework. 
We show, more pessimistically, that the system will fail even in the absence of garden-variety transaction costs---that is, even if the parties reach an agreement to sell or license human-created content to AI firms.
{ The first reason is the {market power} of AI firms.}
A monopsonist (or oligopsonist) AI firm reduces prices paid to human content creators, sacrificing the quality of content it buys in favor of higher (per-unit-of-quality) profit when it sells its \gai-generated content to consumers. 
Though this lower price is profitable for the AI firm, it reduces human incentives to create high-quality content below what is socially optimal. 
{ A second feature of this market is the {correlation} between different creators' content. While it does not directly drive this under-investment---since the buyer already accounts for the correlation when setting prices---it shapes how the under-investment is distributed. 
Our analysis reveals that the markdown falls most heavily on the creators of the most original content.} 
So the property rights model breaks down not because it fails the technological-progress objective but, counterintuitively, because it fails the creative-incentives objective, { and it does so unevenly across creators}. 
This finding inverts common wisdom about the effect of strong intellectual property rights. 

In deriving this result, we draw on recent papers in information economics, particularly those by \citet{acemoglu2022too} and \citet{bergemann2019economics}, which demonstrate in the privacy context that when individual data are correlated, selling one's data generates a negative externality by revealing information about others, leading to {over-supply} (``excessive data sharing'') relative to social optimum.
We extend this literature in several ways. 
Whereas \citet{acemoglu2022too} treat data as an {endowment}, we shift attention to the \emph{production} of creative works, where correlation implies an asymmetric substitutability.
This richer analysis allows us to investigate the differential effects of \gai on human creative works with different levels of novelty or originality, demonstrating the {uneven} under-investment (``originality penalty'') discussed above.

Further, we move beyond the static context of existing models to explore the dynamic effects of \gai interaction with potential content creators. 
The central result in our dynamic analysis is the ``curse of precision'' discussed above whereby a good model sows the seeds of its own decay.
This finding bridges two contrasting phenomena identified in machine learning: while scaling up AI models unlocks ``emergent abilities'' \citep{wei2022emergent,schaeffer2023emergent,berti2025emergent}, recursive training on self-generated data triggers ``model collapse'' \citep{shumailov2023curse,alemohammad2023self}.
We interpret and reconcile these phenomena through an economic lens: stronger AI capabilities act as a substitute for expensive human creativity, which creates a dynamic ``market for lemons'' \citep{akerlof1970market,tullis2025} that \textit{disincentivizes} the production of original content. 
Consequently, our analysis suggests that ``model collapse'' can happen---even with consistent fresh human inputs---as an economic consequence of the model's own success. 
Hence the curse of precision.

Our analysis of market failure in the presence of IP also draws on and contributes to an extensive economics literature on property rights. 
Building on the seminal contributions of \citet{coase1960} and \citet{W79}, this literature has largely focused on transaction costs, and options-to-own have been proposed as potential solutions \citep{C06, SW16}.
We are interested in the efficient allocation of property rights in data, so we tailor the model to analyze data as productive assets that can be traded in a market.
Sharing this view, a parallel line of macro-finance research treats data as a {nonrival byproduct} of economic activities, which leads to expansion and superstar firms \citep{farboodi2020long,farboodi2021model,farboodi2023data}.
By contrast, we revisit the classical insight of \citet{arrow1962economic} that valuable information comes from \emph{costly creative production}. 
However, unlike Arrow's classic setting where under-investment stems from {inappropriability} (the difficulty of excluding freeriders who copy innovators' data), we identify a unique challenge of {statistical substitutability}:\footnote{\citet{jain2010equilibrium} similarly observe substitutability among digital creative goods, but in their setting such substitutability is {semantic} and arises on the consumer side---listeners treat songs within a genre as interchangeable. Our {statistical} substitutability instead arises on the production side, driven by the correlation across creators' outputs.}
{ Even with strict intellectual property rights, the correlation across creators' outputs lets a buyer with market power mark down what it pays for content, and---as we show---it marks down the most original, least redundant content the most, discouraging precisely the creators whose contributions are hardest to replace.}

Focusing on the production of creative content on online platforms---where creators strategically compete for user attention exploiting the recommendation system---several recent works study the {optimal contest design} within an online platform \citep{jagadeesan2023competition,jagadeesan2023supply,yao2023bad,yao2023rethinking,immorlica2024clickbait,golrezaei2025contest}, a concept dating back to \citet{glazer1988optimal}.
{ Most related, a concurrent paper studies contest design when AI Overviews divert traffic from creators, proposing citation and compensation mechanisms to preserve their incentives \citep{wu2026ai}.}
While creators are strategic in both their and our models, our model focuses on {pricing} for AI training, rather than {ranking} for user consumption.
So the main friction in our model is the {imperfect substitution} of human creators, instead of the {congestion} of user attention.

Our proposed mechanism draws inspiration from the extensive economics literature analyzing the role of intermediaries in different markets \citep{RV87,FJS23} as well as intermediate legal regimes utilizing collective management organizations (CMOs) like ASCAP and BMI.
But while CMOs primarily act as centralized clearinghouses to reduce transaction costs \citep{cotter2005some,gilbert2017collective}, we argue that in the market for \gai training data, reducing transaction costs alone is not enough.
{ Because the failure stems from the buyer's market power rather than transaction costs, the solution must counter that power; and because creators' content is statistically substitutable, it must also reward originality.}
We achieve this by adopting a {two-part tariff} structure \citep{oi1971disneyland}: By combining price-per-effort payments with a fixed subsidy, we redistribute the surplus attained by socially optimal production back to all participants, thus advancing technological progress and preserving human creative incentives at the same time.

Finally, methods have been proposed in machine learning to quantify the value of training data.
For example, \citet{ghorbani2019data} and \citet{jia2019towards} proposed the {data Shapley value} to measure the marginal contribution of a data point to model performance. \citet{koh2017understanding} developed {influence functions} to approximate the counterfactual impact of the absence of a data point in low dimensional environments, while \citet{zou2025newfluence} calculated this impact in high dimensional environments. 
Various data marketplaces have also been established \citep{agarwal2019marketplace,schomm2013marketplaces,golrezaei2014pricing,muschalle2013pricing}.
But these works study fair attribution for a {fixed} dataset whereas we focus on the {production} of data.

\section{Static Model and Short-Run Market Failures}\label{sec:model}
We begin by analyzing a static market with multiple content creators and one AI firm. 
Under the classic copyright binary, either the unauthorized use of copyrighted works to train \gai models is copyright infringement---in which case the AI firm is required to obtain authorization from each copyright owner---or the use is fully shielded from copyright liability under fair use. 
We first show that, consistent with common intuition, a free-for-all system fails to provide incentives for human content creation.
We further show that the strict intellectual property rights regime leads to two market failures: first, due to the buyer's significant market power, creators invest less effort into creating high-quality content than is socially optimal (an \textit{under-investment} failure); second, this under-investment applies more forcefully to highly original or innovative content creators than to those whose content is highly correlated with others', leading to a proliferation of homogenized content (an \textit{adverse selection} which we call the ``originality penalty''). 

\subsection{Static Model Setup} \label{sec:model setup}

We formalize the interaction between human creators and an \gai developer as a market for information where production is costly and products are correlated.
Specifically, consider a market with two types of participants: 
There are $N\ge 2$ information producers (human content producers), indexed by $i=1,2,\ldots,N$, and a single buyer
(e.g., an AI firm acquiring training data from creators).
The economy centers on a fixed but unobserved state variable $X \in \mathbb{R}$, representing the best (or most useful) response to a given user query.\footnote{In some contexts, the ``best'' response has a clear, objective meaning, as when a user asks for a solution to a math problem; in other contexts, the best response is more subjective and geared to a user’s particular needs, as when a user asks a Chatbot to write a bedtime story for her child, in which case the variable $X$ measures how well the model can understand and respond to the user's preferences given her prompts.}

\paragraph{Production of Content.} We model the content created by each creator $i$ as a noisy signal $s_i$ of $X$, namely
\begin{equation}\label{eq:common factor}
s_i = X + \xi_i,~\xi_i=\varepsilon_i+\beta_i\eta,\quad \forall i=1,2,\ldots,N.
\end{equation}
As such, $\xi_i$ is a random variable denoting the deviation of an individual creator's content from $X$.
The individual deviation (or noise) $\xi_i$ has two components: an ``idiosyncratic component'' $\varepsilon_i$ whose precision is an increasing function of the individual agent's effort to produce high-quality content, and a ``common factor'' $\eta$ that is shared among all creators but weighted differently by creators (as captured by the parameter $\beta_i$).\footnote{Our additive signal structure is motivated by auction theory with interdependent values \citep{milgrom1982theory} and the value of public information in information economics \citep{morris2002social}, where agents observe noisy signals containing a statistically dependent public component. We adapt this logic to AI data markets.} 
For simplicity, we assume all $\beta_i$ are public information.

To produce content, agent $i$ chooses an effort level $h_i\ge 0$. Effort reduces the variance of the idiosyncratic noise such that $\varepsilon_i\sim \mathcal N(0,h_i^{-1})$, but incurs a strictly convex production cost $C(h_i):=\frac c2 h_i^2$ where $c > 0$ is common knowledge. We adopt the quadratic cost function---a standard formulation in contract theory \citep[see, e.g.,][]{baker1994subjective}---as a tractable abstraction of any convex cost function. We emphasize that our core findings---originality penalty, curse of precision, and the intermediated data market mechanism---extend to general convex costs. A strictly convex production cost function captures the increasing marginal cost when creators strive for higher precision via, e.g., careful research, thorough fact-checking, or creative effort.

For the common factor (the $\beta_i\eta$ term), we suggest two (mutually compatible) interpretations.  
In the first interpretation, $\eta$ represents a general {trend}, such as a stylistic or substantive convention (e.g., the seemingly star-crossed lovers improbably getting together at the end), so a large $\beta_i$ denotes agent $i$'s conformity to prevailing conventions whereas a low (near-zero) $\beta_i$ signifies agent $i$'s willingness to buck the trend and strike out on her own, to march to the beat of a different drummer. 
That is the sense in which we characterize high-$\beta$ creators as ``rank and file'' or ``run of the mill'' and low-$\beta$ creators as highly ``original'' or ``creative.'' 
And the ``originality penalty'' is the greater effort decline suffered by highly original creators, as compared to run-of-the-mill creators, when we move from the social optimum to the \gai market equilibrium. 
In the second interpretation, $\eta$ represents a common {bias}, such as socially prevalent prejudice against a group or practice or thought (e.g., pervasive prejudice against women in science or law), and $\beta_i$ measures the extent to which agent $i$ shares that bias. 
Then, the originality penalty can be interpreted as {bias amplification} by \gai---not only in the general sense that \gai-generated content is more biased (less accurate) than is socially optimal, but in the specific sense that the equilibrium effort reduction (compared to the social optimum) is {greater} for content with low common-bias than for content with high common-bias.
Depending on the context, one or the other interpretation may be more natural.  
As noted, the two interpretations are entirely compatible because, in the current model specification, the trend is biased (the addition of $\eta$ can never decrease the deviation of the creators' signals from the true state, $s_i - X$), so higher creativity implies higher quality---all else (i.e., effort) being equal.\footnote{
In principle, it's possible to conceptualize a notion of originality or creativity (or a residual of it) that is independent of accuracy or quality. 
But, without getting into the details of how such a notion could be mathematically operationalized, our conjecture is that if there is an equilibrium penalty for the kind of originality that contributes to quality (as we have shown there is), then \textit{a fortiori} there must also be an equilibrium penalty for the kind of originality that does not contribute to quality.}

A defining feature of our framework is the correlation structure of signals. 
We model the common factor $\eta$ using a hierarchical Bayesian structure:
\begin{equation}\label{eq:Bayesian mu}
\mu_\eta\sim \mathcal N(0,\gamma),\quad \eta\mid \mu_\eta \sim \mathcal N(\mu_\eta,\sigma_\eta^2),
\end{equation}
where both $\gamma\ge 0$ and $\sigma_\eta^2\ge 0$ are publicly known parameters, and $\sigma_\eta^2+\gamma>0$. The term $\mu_\eta$ represents the systematic bias of the current information environment. This hierarchical structure captures that the common bias is itself uncertain: while the buyer knows there may be a shared trend, bias, or AI-induced pattern affecting creators, it does not know its direction or magnitude \textit{ex ante}, which is captured by the prior variance $\gamma$.
This Bayesian setup is crucial for our dynamic model incorporating AI-assisted content creation in section \Cref{sec:dynamic}: 
As AI-generated content becomes pervasive, human-generated content becomes more homogeneous, the systematic bias $\mu_\eta$ becomes harder to filter out, and hence the prior variance $\gamma$ effectively increases (see \Cref{sec:dynamic} for more details).

Aside from the Bayesian perspective, \Cref{eq:common factor,eq:Bayesian mu} can be viewed through a minimax robust optimization lens: the platform estimates the ground-truth $X$ knowing only that the magnitude of systematic bias is bounded ($\mu_\eta^2\le \gamma$). This establishes an alternative interpretation for our dynamic model: when AI tools are ubiquitous, imitators' 
reliance on AI increases, thereby inducing a larger bound $\gamma$ for systematic bias; see \Cref{lem:minimax} for formal equivalence.

\paragraph{Acquisition of Data.}
To incentivize the production of high-quality  data while balancing the total purchase cost, the buyer offers a  price $p_i\ge 0$ per unit of effort to each creator $i$.\footnote{A critical premise of our Stackelberg game is that the buyer can contract on the effort vector. In reality, raw human ``effort'' (e.g., actual labor hours spent) is typically unobservable. However, modern AI platforms routinely contract on verifiable quality proxies: rigorous unit tests for coding tasks, detailed rubric scores in Reinforcement Learning from Human Feedback, or verified expert credentials. In consistency with various reduced-form approaches in contract theory, we treat $h_i$ not as unobservable labor, but as the verifiable idiosyncratic signal precision, which is directly mapped from those observable quality metrics.}
The buyer---as a leader in our Stackelberg game---has full power to decide all prices $p_i$.
The creators react to the price vector $\bm p$ by deciding an individual effort level $h_i$ that maximizes their own quasi-linear payoff:
\begin{equation}\label{eq:creator payoff}
U_i(h_i;p_i)=h_ip_i-C(h_i),\quad \forall i=1,2,\ldots,N.
\end{equation}

The first-order condition gives creators' {best response} to any price vector $\bm p$, namely $\bm h(\bm p)$, a
\begin{equation*}
h_i(\bm p)=\frac{p_i}{c},\quad \forall i=1,2,\ldots,N.
\end{equation*}
Given this one-to-one correspondence between $\bm p$ and $\bm h(\bm p)$, it is convenient to likewise define $\bm p(\bm h)$ as the price vector $\bm p$ inducing a best response of $\bm h$. Specifically, we define $p_i(\bm h)=h_ic$ for all $i$.

\paragraph{Aggregation of Data.}
After the signals $\bm s=(s_1,s_2,\ldots,s_N)'$ are produced according to \Cref{eq:common factor,eq:Bayesian mu}, the AI firm buys all these signals (by paying the promised $p_ih_i$ to each creator $i$) to construct an estimate of $X$, which we call $\hat X$. 
The firm derives revenue from the effective precision of this estimator, reflecting the fact that it can earn more when its model's responses to user queries are more useful. 
Specifically, we assume the buyer seeks to minimize the Mean Squared Error (MSE) of $\hat X$ subject to unbiasedness:
\begin{equation}\label{eq:MSE}
\min\limits_{\hat X} \quad \E[(\hat X-X)^2] \qquad ~~\text{s.t.} \quad \E[\hat X]=X.
\end{equation}
Under our model, the optimal solution to \Cref{eq:MSE} is the best linear unbiased estimator (BLUE) of $X$. While modern LLM optimizes complex objectives like cross-entropy, the minimization in \Cref{eq:MSE} is a canonical approximation for the value of information: under Gaussian assumptions, minimizing MSE is equivalent to maximizing log-likelihood, aligning with the probabilistic foundations of generative models.

The firm's revenue depends on the {effective precision} (also known as Fisher information in statistics) of its estimator $\hat X$, denoted $K$ and defined as $K(\bm h)  = \Var(\hat X)^{-1}$, with the interpretation that (a monotonic transformation of) the effective precision functions as the ``total factor productivity'' of the \gai model.
Practically speaking, a higher precision (lower variance) means that the content generated by the \gai model in response to a user query is more useful to the user, which helps the firm earn greater revenue.
The \gai developer's payoff (profit) function is the revenue earned from precision minus the price paid to creators for their content, that is,%
\footnote{In our model, the buyer's profit is \emph{linear in precision increase} rather than in variance reduction (another common option in information economics used by, e.g., \citealt{acemoglu2022too}). The reason behind our modeling choice is two-fold: First, standard information economics establishes that when information guides {downstream decisions}, utility is linear in precision (and thus {strictly convex}, instead of linear, in variance reduction) \citep[Chapter 8]{veldkamp2011information}.
Second, this convexity captures the ``emergent abilities'' of Generative AI \citep{wei2022emergent}: reducing error rates below critical thresholds triggers {phase transitions} that unlock new capabilities---beyond incremental improvements of existing ones---thereby yielding utility gains strictly exceeding the linear function of variance reduction.}
\begin{equation*}
\Pi(\bm p):=K(\bm h(\bm p))-\sum_{i=1}^N p_i h_i(\bm p),
\end{equation*}
where $\bm h(\bm p)=\bm p/c$ is the creators' best response to $\bm p$. Given the one-to-one correspondence between $\bm h$ and $\bm p$, it would be more convenient to write $\Pi$ in terms of creators' effort $\bm h$, namely
\begin{equation}\label{eq:profit}
\Pi(\bm h)=K(\bm h)-\sum_{i=1}^N p_i(\bm h) h_i = K(\bm h)-\sum_{i=1}^N c h_i^2.
\end{equation}

We summarize the sequence of play in the following Definition. 

\begin{definition}[Sequence of Play]\label{def:seq of play}
The sequence of play is as follows:
\begin{enumerate}
\item The buyer offers a unit price $\bm p=(p_1,p_2,\ldots,p_N)'$ to all creators simultaneously.
\item Creators simultaneously choose their best response $\bm h(\bm p)$, which maximizes payoff $U_i(h_i;p_i)$.
\item From the signals $\bm s=(s_1,s_2,\ldots,s_N)'$ generated by \Cref{eq:common factor,eq:Bayesian mu}, the buyer finds the estimator $\hat X$ minimizing the objective function in \Cref{eq:MSE} and receives profit $\Pi(\bm h)$.
\end{enumerate}
\end{definition}

In this framework, we contrast the {equilibrium} outcome of the game defined above with the {first best} solution.
The equilibrium is the outcome of the game specified by, \Cref{def:seq of play} where creators and the buyer pursue their own interests. We denote the resulting effort vector and prices as $\bm h^\ast$ and $\bm p^\ast$, respectively.
On the other hand, the first best solution $\bm h^{\text{sp}}$ is the outcome dictated by a benevolent social planner, who controls the production efforts to maximize social welfare:
\begin{equation}\label{eq:social welfare}
W(\bm h) := \Pi(\bm h)+\sum_{i=1}^N U_i(h_i;p_i(\bm h))=K(\bm h)-\sum_{i=1}^N \frac c2 h_i^2,
\end{equation}
where recall that $\bm p(\bm h)$ is the price vector inducing $\bm h$ as the best response, and it is excluded from social welfare as it is a pure transfer among the market participants.

\subsection{Failure of Free-for-All Regime}

We now analyze the market equilibrium under existing policy frameworks, beginning with the free-for-all (``all model training is fair use'') regime. {In our model, it is easy to make the following observation: \emph{A regime that permits \gai to take human-created content for free cannot sustain creative incentives.}}
The reason is straightforward: If the buyer offers no price to content providers, i.e., $\bm p=(0,0,\ldots,0)'$, then the only best response of creators is $\bm h(\bm 0)=(0,0,\ldots,0)'$. 
That is, creators never invest any effort in production.
This is in line with common intuition; indeed, it is the very freerider problem that justifies the existence of intellectual property (IP) laws.
Henceforth, we only focus on the ``strict intellectual property rights'' regime.\footnote{
Even if the AI firm can lawfully take human-created content for free \textit{ex post}, the firm might enter into a contract to purchase content for a positive price, thereby incentivizing content production \textit{ex ante}. 
Setting aside the practical difficulties (transaction costs) entailed in identifying and negotiating with potential content creators \textit{ex ante}, such an arrangement, though legally different from an IP-based regime (in that it's based on contract rather than property rights), would practically operate much like the ``property rights'' regime analyzed below.}

\subsection{Short-Run Market Failures: Under-Investment and Originality Penalty}\label{sec:short II}
The first market failure we identify is the {under-investment} in producing high-quality content. 
Because the buyer has full bargaining power to decide the market price $\bm p=(p_1,p_2,\ldots,p_N)'$ (also known as {monopsony power}),
it may offer a lower price, which incentivizes lower effort levels from the creators but elicits higher profit per effort. 
This leads to our first main result:
\begin{proposition}[Underpowered Creative Incentives]\label{obs:noInnov}
Under the market defined in \Cref{sec:model setup}, market equilibrium $\bm h^\ast$ and social planner solution $\bm h^{\text{sp}}$ are unique. Furthermore, $K(\bm h^\ast)<K(\bm h^{\text{sp}})$.\footnote{The under-production result in \Cref{obs:noInnov} only applies to the overall effective precision, i.e., $K(\bm h^\ast)$ compared to $K(\bm h^{\text{sp}})$. The stronger per-creator under-production---namely $h_i^\ast\le h_i^{\text{sp}}$ for all $i=1,2,\ldots,N$---is unfortunately incorrect. A counter-example is in \Cref{remark:individual creator over-produce} of \Cref{app:incentives}, where---intuitively---a highly redundant creator $i$ is set to invest $0$ effort by the social planner (hence $h_i^{\text{sp}}=0$); but due to the under-production of other creators in the equilibrium, the buyer finds it necessary to incentivize a strictly positive effort from creator $i$ (hence $h_i^\ast>0$).\label{footnote:individual creator over-produce}}
\end{proposition}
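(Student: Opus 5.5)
The plan is to write $K$ in closed form, show it is concave, and then compare the two optimization problems using a revealed-preference sandwich in which only the cost coefficients differ.

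\textbf{Closed form for $K$.} Since $X$ is a fixed unknown and $\E[\hat X]=X$ is required, the BLUE has precision $K(\bm h)=\bm 1'\Sigma(\bm h)^{-1}\bm 1$. Here the noise covariance is $\Sigma(\bm h)=D(\bm h)^{-1}+\tau\bm\beta\bm\beta'$, with $D(\bm h)=\mathrm{diag}(h_1,\ldots,h_N)$ and $\tau:=\sigma_\eta^2+\gamma>0$ the marginal variance of $\eta$ from \Cref{eq:Bayesian mu}. Applying Sherman--Morrison (and extending by continuity to $h_i=0$, i.e.\ infinite-variance signals that receive zero weight) gives
\[
K(\bm h)=\sum_{i=1}^N h_i-\frac{\tau\left(\sum_{i} h_i\beta_i\right)^2}{1+\tau\sum_{i} h_i\beta_i^2}.
\]
The subtracted term is a quadratic-over-linear function $A(\bm h)^2/B(\bm h)$, where $A$ is linear and $B$ is affine and positive on $\mathbb R^N_{\ge 0}$. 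Such a function is jointly convex, so $K$ is concave on $\mathbb R^N_{\ge 0}$.

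\textbf{Existence and uniqueness.} Both $W(\bm h)=K(\bm h)-\frac c2\sum_i h_i^2$ and $\Pi(\bm h)=K(\bm h)-c\sum_i h_i^2$ are strictly concave. Since $0\le K(\bm h)\le\sum_i h_i$, both objectives are coercive on the closed orthant. Each therefore has a unique maximizer, namely $\bm h^{\text{sp}}$ and $\bm h^\ast$ respectively. Existence and uniqueness are routine; I expect no obstacle here.

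\textbf{Comparison of $K$ values.} Let $Q(\bm h)=\sum_i h_i^2$. Optimality of $\bm h^{\text{sp}}$ for $W$ gives
\[
K(\bm h^{\text{sp}})-K(\bm h^\ast)\ \ge\ \tfrac c2\bigl(Q(\bm h^{\text{sp}})-Q(\bm h^\ast)\bigr).
\]
Optimality of $\bm h^\ast$ for $\Pi$ gives
\[
K(\bm h^{\text{sp}})-K(\bm h^\ast)\ \le\ c\bigl(Q(\bm h^{\text{sp}})-Q(\bm h^\ast)\bigr).
\]
Combining the two, $\frac c2\Delta Q\le c\,\Delta Q$, so $\Delta Q:=Q(\bm h^{\text{sp}})-Q(\bm h^\ast)\ge 0$. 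If $\Delta Q>0$, the first inequality already yields $K(\bm h^\ast)<K(\bm h^{\text{sp}})$.

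\textbf{Ruling out equality (the main step).} It remains to exclude $\Delta Q=0$. In that case the sandwich forces $K(\bm h^{\text{sp}})=K(\bm h^\ast)$, and hence $W(\bm h^{\text{sp}})=W(\bm h^\ast)$. Uniqueness of the maximizer of $W$ then gives $\bm h^\ast=\bm h^{\text{sp}}=:\bm h$. The KKT conditions of the two problems at $\bm h$ read $\partial_iK(\bm h)=c h_i$ and $\partial_iK(\bm h)=2ch_i$ on coordinates with $h_i>0$. This forces $h_i=0$ for every $i$, so $\bm h=\bm 0$. But $\partial_iK(\bm 0)=1>0$, which violates the social planner's KKT condition $\partial_i K\le c h_i=0$ at a zero coordinate. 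This contradiction shows $\Delta Q>0$ and completes the argument. The step needing the most care is justifying differentiability of $K$ at the boundary $h_i=0$; the closed form above is smooth on a neighborhood of the orthant (its denominator stays positive), so the KKT conditions apply directly.
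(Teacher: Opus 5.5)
Your proof is correct, and it reaches the strict inequality by a different route from the paper's.

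The paper reduces both problems to a one-dimensional choice of aggregate precision through the value function $C_{\min}(y)=\min\{\tfrac c2\sum_i h_i^2 : K(\bm h)\ge y\}$. It rewrites $\max\Pi=\max_y\,(y-2C_{\min}(y))$ and $\max W=\max_y\,(y-C_{\min}(y))$. It then compares the first-order conditions $C_{\min}'(y^\ast)=\tfrac12$ and $C_{\min}'(y^{\text{sp}})=1$ using strict convexity of $C_{\min}$.

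You instead work directly in $\bm h$-space with a revealed-preference sandwich. The two optimality inequalities give $\Delta Q\ge 0$ and the weak inequality on $K$ without further structure. Strictness then comes from your coincidence argument, which is sound: the stationarity conditions $\partial_iK=ch_i$ and $\partial_iK=2ch_i$ cannot both hold at a positive coordinate, and $\partial_iK(\bm 0)=1$ rules out $\bm 0$.

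Your route buys rigor with minimal machinery. The paper's version tacitly treats $C_{\min}$ as differentiable with interior optima. It justifies strict convexity of $C_{\min}$ by calling $K$ strictly concave, which fails in general (e.g.\ $K=\sum_i h_i$ when $\bm\beta=\bm 0$); that strict convexity really comes from the strictly convex cost together with the binding precision constraint. The paper also never argues existence, which your bound $0\le K\le\sum_i h_i$ supplies.

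The paper's route buys the economic reading it stresses in the text. After aggregation, the monopsonist faces a marginal cost of precision $2C_{\min}'(y)$ against the planner's $C_{\min}'(y)$. This is a textbook markdown, and the correlation structure enters only through the shape of $C_{\min}$.

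Both proofs rely on the same feature of quadratic costs: the firm's expenditure $c\sum_i h_i^2$ is exactly twice the production cost.
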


The proof is in \Cref{app:incentives}.
While \Cref{obs:noInnov} is consistent with the intuition that monopsony power leads to under-production, it paves the way for the second market failure that we identify---the {originality penalty}---a novel form of market failure distinctive to the context of data markets. 

To see why the originality penalty is non-conventional, recall that the common factor $\eta$ in \Cref{eq:common factor} implies that every creator $i$'s signal $s_i$ contains a potentially bias towards $\eta$ to some extent.
It thus follows that the buyer---who is only interested in recovering $X$ but not $\eta$---faces a dual objective: \textit{(i)} incentivizing production effort to reduce the idiosyncratic noise $\varepsilon_i$, while \textit{(ii)} mitigating the statistical bias arising from the common factor $\eta$.
One may consequently expect the AI firm to offer a better price to highly original content creators---those with low $\beta_i$---since their effort helps better recover $X$. 
Such a conjecture seems especially reasonable given that our model assumes homogeneous production costs, so producing ``original'' content does not cost more than ``conventional'' content. 
Thus, even allowing the AI firm to ``lowball'' human creators---using its monopsony power---by offering a price that does not incentivize sufficient investment in quality (\Cref{obs:noInnov}), one would expect the AI firm to lowball original content creators less than it lowballs run-of-the-mill creators in order to get the benefit of the former's individuated (non-redundant) signals.

But it turns out that the equilibrium outcome is precisely the opposite: In the market equilibrium, relative to the social optimum, the reduction in effort is {greater} for original creators than for rank-and-file creators. We call this the ``originality penalty,'' characterized in \Cref{obs:biasOrigin}.

\begin{proposition}[{Originality Penalty}] \label{obs:biasOrigin}
For each creator $i=1,2,\ldots,N$, let $R_i:=h_i^\ast/h_i^{\text{sp}}$ be the ratio of equilibrium effort to socially optimal effort. {Then in a non-degenerate market without near-redundant creators,\footnote{Formally, $\beta_i<1/\Lambda(\bm h^{\text{sp}})$ for all $i$ with $\Lambda(\bm h)$ defined in \Cref{eq:FOC lambda} of \Cref{app:incentives}. This implies the BLUE $\hat X$ defined in \Cref{eq:MSE} places a positive weight on every creator's signal; equivalently, at the social optimum, a more correlated creator (higher $\beta_i$) invests less effort (smaller $h_i^{\text{sp}}$). See \Cref{remark:non-redundant} in \Cref{app:incentives} for details. We further remark that, even under this condition, the per-creator under-investment anticipated in \Cref{footnote:individual creator over-produce} remains untrue.\label{footnote:non-redundant}} creators with a higher $\beta_i$ always retain a strictly higher $R_i$.} Furthermore:
\begin{itemize}
\item For a creator $i$ with $\beta_i\approx 0$ (an ``innovator'' who does not rely on the common factor), regardless of the remaining $\beta$'s, their ratio $R_i\approx \frac 12$.
\item For a creator $i$ with $\beta_i\gg 0$ (an ``imitator'' producing homogeneous content), the ratio $R_i>\frac 12$. 
\item In the limiting case of a homogeneous market saturated by rank-and-file creators ($N\to \infty$ and $\beta_i\equiv \beta>0$), the ratio converges to
\begin{equation*}
\lim_{N\to \infty} R_i=\frac{1}{\sqrt[3]{2}}\approx 0.79.
\end{equation*}
\end{itemize}
\end{proposition}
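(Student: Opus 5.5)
The plan is to obtain a closed form for the effective precision $K(\bm h)$, read off the first-order conditions for both the social planner and the buyer, and compare them through a single scalar statistic $\Lambda$.

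\textbf{Step 1: closed form for $K$.} Marginally $\eta\sim\mathcal N(0,\tau^2)$ with $\tau^2:=\sigma_\eta^2+\gamma>0$. Treating $X$ as the fixed estimand, the noise vector $\bm\xi$ has covariance $\Sigma=D^{-1}+\tau^2\bm\beta\bm\beta'$, where $D=\mathrm{diag}(\bm h)$. The BLUE has precision $K=\bm 1'\Sigma^{-1}\bm 1$. By Sherman--Morrison,
\begin{equation*}
K(\bm h)=\sum_i h_i-\frac{\tau^2\bigl(\sum_i h_i\beta_i\bigr)^2}{1+\tau^2\sum_i h_i\beta_i^2}.
\end{equation*}
Differentiating, the cross terms combine into a perfect square:
\begin{equation*}
\frac{\partial K}{\partial h_i}=\bigl(1-\Lambda(\bm h)\beta_i\bigr)^2,\qquad \Lambda(\bm h):=\frac{\tau^2\sum_j h_j\beta_j}{1+\tau^2\sum_j h_j\beta_j^2},
\end{equation*}
which I expect is the $\Lambda$ of \Cref{eq:FOC lambda}.

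\textbf{Step 2: first-order conditions.} The social welfare \Cref{eq:social welfare} has interior FOC $c\,h_i^{\text{sp}}=(1-\Lambda^{\text{sp}}\beta_i)^2$. The buyer's profit \Cref{eq:profit} has FOC $2c\,h_i^\ast=(1-\Lambda^\ast\beta_i)^2$. Both are of the form $h_i=\kappa(1-\Lambda\beta_i)^2$, with $\kappa^{\text{sp}}=1/c$ and $\kappa^\ast=1/(2c)$.

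I take uniqueness and the interior nature of both solutions from \Cref{obs:noInnov} together with the non-redundancy assumption $\beta_i<1/\Lambda(\bm h^{\text{sp}})$. Some care is needed to confirm that the same condition also holds at $\bm h^\ast$, which Step 3 delivers. Hence
\begin{equation*}
R_i=\frac12\Bigl(\frac{1-\Lambda^\ast\beta_i}{1-\Lambda^{\text{sp}}\beta_i}\Bigr)^2 .
\end{equation*}

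\textbf{Step 3: compare $\Lambda^\ast$ and $\Lambda^{\text{sp}}$ (the crux).} Substituting $h_i=\kappa(1-\Lambda\beta_i)^2$ into the definition of $\Lambda$ and clearing denominators gives the scalar fixed-point equation
\begin{equation*}
\frac{\Lambda}{\kappa}=\tau^2\sum_i\beta_i(1-\Lambda\beta_i)^3 .
\end{equation*}
On the range $\Lambda\beta_i<1$, the left side is increasing in $\Lambda$ and the right side is decreasing. So there is a unique root, and it decreases as $\kappa$ decreases. Therefore $\Lambda^\ast<\Lambda^{\text{sp}}$, and in particular $\Lambda^\ast\beta_i<1$ as well.

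Everything then follows from the map $\beta\mapsto(1-a\beta)/(1-b\beta)$ with $a<b$, which is strictly increasing on $[0,1/b)$ and equals $1$ at $\beta=0$. Consequently:
\begin{itemize}
\item $R_i$ is strictly increasing in $\beta_i$.
\item $R_i=\tfrac12$ exactly at $\beta_i=0$, and by continuity $R_i\approx\tfrac12$ for $\beta_i\approx 0$, whatever the other $\beta_j$ are.
\item $R_i>\tfrac12$ whenever $\beta_i>0$.
\end{itemize}

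\textbf{Step 4: homogeneous limit.} With $\beta_i\equiv\beta$ the fixed point reads $\Lambda/\kappa=\tau^2N\beta(1-\Lambda\beta)^3$. As $N\to\infty$ this forces $\Lambda\to1/\beta$, so that
\begin{equation*}
1-\Lambda\beta\sim\bigl(\Lambda/(\kappa\tau^2N\beta)\bigr)^{1/3}\propto\kappa^{-1/3}.
\end{equation*}
The ratio $(1-\Lambda^\ast\beta)/(1-\Lambda^{\text{sp}}\beta)$ therefore tends to $(\kappa^{\text{sp}}/\kappa^\ast)^{1/3}=2^{1/3}$, and $R_i\to\tfrac12\cdot2^{2/3}=2^{-1/3}$.

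The main obstacle I expect is Step 3. One must carefully justify uniqueness of the root on the admissible range and rule out corner solutions $h_i=0$ for the equilibrium. The monotonicity argument above should handle both.
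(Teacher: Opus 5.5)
Your proposal is correct and follows the paper's proof essentially step for step. The shared steps are: the perfect-square derivative $(1-\beta_i\Lambda)^2$; the scalar fixed-point equation, whose linear left side is compared against the decreasing cubic $\sum_j\beta_j(1-\beta_j\Lambda)^3$ to get $\Lambda(\bm h^\ast)<\Lambda(\bm h^{\text{sp}})$; the monotone ratio $(1-\beta\Lambda^\ast)/(1-\beta\Lambda^{\text{sp}})$; and the cube-root asymptotics in the homogeneous limit, which you organize through $\kappa^{-1/3}$ rather than computing $h^\ast$ and $h^{\text{sp}}$ separately. One small simplification: the derivative $-3\sum_j\beta_j^2(1-\beta_j\Lambda)^2\le 0$ holds for every $\Lambda$, so the cubic is nonincreasing on all of $[0,\infty)$ and not only where $\Lambda\beta_i<1$; the root is therefore globally unique, and you need no a priori restriction on $\Lambda^\ast$, which removes the slight circularity in your Step 3.
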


The counterintuitive result in \Cref{obs:biasOrigin} arises because of the interplay of {monopsony power} (i.e., the buyer has full market power) and {diminishing returns} (i.e., imitators' contribution to precision diminishes due to content correlation). 
{ On a high level, this is because at the social optimum the planner asks little effort from imitators---their content is redundant, so extra effort costs the same but barely raises precision---and much more from innovators. The AI firm's underpricing then falls unevenly: it buys less from the innovators to suppress the high cost---with convex costs, buying less means paying lower on every unit of effort---cutting them to about half the social optimum; on the other hand, it gains little from squeezing the already-low effort of those imitators.}

Specifically, innovators---because they have a near-zero weight on the common factor ($\beta_i\approx 0$)---are essentially facing the monopsonist AI firm alone. 
Their creative outputs are almost independent, in the sense that their marginal contribution to the precision remains a constant no matter how much effort they have already invested.
The social optimum is where the marginal cost of exerting effort equals this constant contribution.
However, as the production cost is strictly convex in effort, buying less can reduce the price for {all} content; a monopsonist therefore strategically restricts the purchase quantity in exchange for a lower unit price (and higher unit profit). 
Solving the tradeoff between quantity and unit price gives $h_i^\ast\approx \frac 12 h_i^{\text{sp}}$.

By contrast, imitators---given their high reliance on the common factor ($\beta_i\gg 0$) and thus high correlation with other creators---face strong diminishing returns.
One can better understand this phenomenon as if all creators were reporting the same news: The first piece of report is highly valuable, but subsequent reports are nearly devoid of value. Consequently, the socially optimal level of data production would then be ``one piece of information tells it all,'' so the level of effort is low even in the social optimum ($h_i^{\text{sp}}$). 
In the market equilibrium, even when the AI firm has full monopsony power, the price must still incentivize (roughly) this total production level: Missing a part of the first news report results in a value loss much higher than the cost saved. 
Formalizing these arguments, we prove in \Cref{app:incentives} that $h_i^\ast\ge \frac{1}{\sqrt[3]{2}} h_i^{\text{sp}}\approx 0.79 h_i^{\text{sp}}$ if $\beta_i$ is sufficiently large.

We remark that, however, \Cref{obs:biasOrigin} is \textit{not} saying that the AI firm pays a lower unit price for original creators' content than for run-of-the-mill creators' content. 
Rather, the meaning of the ``originality penalty'' is that when we move from the social optimum to the \gai market equilibrium, original creators face a greater drop in incentives to invest in effort compared to run-of-the-mill creators.
In other words, \gai penalizes originality more than it penalizes run-of-the-mill content by disincentivizing the original creators more.
In fact, given the redundancy (high correlation) of rank-and-file creators' output, eliciting a great deal of effort from them would not be worthwhile even in the social optimum. Consequently, relative to the low optimal level of effort, the AI firm's strategic underpricing does not result in a great deal of reduced effort and reduced precision.  
Original creators, by contrast, should optimally exert great effort, while the AI firm's underpricing greatly reduces their incentive to do so and, thus, the precision of their content.

The originality penalty identified in \Cref{obs:biasOrigin} poses a risk to human creative incentives: since original creators in reality (e.g., domain experts or premium publishers) typically have {outside options}, revenue generated at a sub-optimal scale may fail to cover their opportunity costs.
The exclusion of original creators' potential contributions from \gai training sets sows the seeds of the model's own decay.
We characterize this long-term crisis in the next section. 

\section{Long-Term Market Failure: The Curse of Precision}\label{sec:dynamic}

The foregoing analysis shows that standard IP rights cannot solve the \gai data crisis: { market power leads to under-investment, and---interacting with data correlation---further to an adverse selection (the originality penalty) that} essentially {refuses to debias} (by incentivizing the production of biased contents).
However, we have so far kept the model static---focusing on a {fixed snapshot} of the data market---while the \gai training pipeline is recursive: AI models are iteratively trained on web-scraped datasets that accumulate over time, but such online data is increasingly \emph{polluted} by content creators who utilize previous AI models.
This creates a closed-loop supply chain where the system's output becomes its own future input, transforming the static ``bias amplification'' problem into a dynamic ``model collapse.''

In this section, we incorporate dynamics into our analysis to study how this recursive structure affects the long-run market equilibrium. Specifically, assume time  is continuous, $t\in [0,+\infty)$. Consider two dynamic ``data stock'' state variables, $S_O(t)$ and $S_C(t)$, where  $S_O(t)$ and $S_C(t)$ represent the {cumulative stock} of ``original'' and ``correlated'' data available at time $t$. Each data stock evolves according to standard capital accumulation dynamics: It increases via the inflow of new data production, and decreases due to depreciation (see \Cref{eq:S dynamic} for the formal definition).
The two stocks of training data together determine the {current model precision} $K(t)$ at any time $t$.

Moreover, recall from \Cref{eq:Bayesian mu} that the common bias $\eta$ is centered around a {systematic bias} $\mu_\eta\sim \mathcal N(0,\gamma)$. The bias was fixed in the static case. In the dynamic model, such bias is no longer exogenous: as the AI model becomes more powerful, the rank-and-file creators 
rely on it more heavily. Consequently, the systematic bias $\mu_\eta$ may be more severe. 
We capture this {bias amplification} through a \emph{recursive bias $\gamma(t)$}, which we model as a linear function of model precision $K(t)$.
That is, stronger models induce more significant systematic biases.

To analyze the dynamics of this system, we consider a continuum of creators facing a uniform market price $p(t)$ that evolves over time $t$ (whose detailed formulation is deferred to \Cref{sec:dynamic model}).
The central question is whether the market {self-corrects over time}:
One might assume that as high-quality human data becomes scarce (i.e., the ratio between original and correlated data $S_O(t)/S_C(t)\to 0$), the market price $p(t)$ will rise, inducing innovative creators to return. We prove, to the contrary, that market failure persists. We identify a central mechanism driving this failure, which we term the ``curse of precision.'' 

The curse of precision consists of two components. First, there is an {upper bound on market price}. In particular, instead of a rising market price that reflects the scarcity of original data, the market price $p(t)$ is capped by the {diminishing marginal value} of correlated data. When either the stock of correlated data accumulates or model precision increases, this marginal value vanishes, so the price is ``trapped'' at a low level. Therefore, as formalized in \Cref{obs:price trap}, the market price is effectively {decoupled} from the scarcity of innovators. The second component is a {barrier to model precision}. If the low market price drives innovators out of the market, then the model---instead of sustaining on a massive amount of correlated data from the rank-and-file creators---hits a ``ceiling.'' In \Cref{obs:collapse}, we prove that even with an infinite supply of correlated data, model precision $K(t)$ still cannot surpass a constant upper bound.

To summarize, our core insight---the {curse of precision}---is that as the \gai model improves (more precisely estimating the state $X$, giving better answers to user queries), it sows the seeds of its own stagnation:
Superior model outputs incentivize rank-and-file human creators to rely more heavily on \gai outputs in producing the content that the model will train on, which lowers the price the AI firm is willing to pay content creators, which may in turn induce the exit of innovative creators, thereby depriving the buyer of the fresh human content it needs to keep up its good performance.
Thus, achieving a high level of precision at a given time is no guarantee of continued success and in fact may be an omen of impending failure.

\subsection{Dynamic Model Setup}\label{sec:dynamic model}

To capture the ecosystem's long-run evolution, we consider a population-level fluid model where all content creators form a {continuum}.

\paragraph{Creators and Market Structure.}
In light of the bifurcation result in \Cref{obs:biasOrigin}, we divide the large population of content creators into the two representative classes:
\begin{itemize}
\item \textbf{Innovators Producing Original Data (Type-$O$):} These creators share a near-zero loading on the common factor, denoted by $\beta_O\approx 0$.\footnote{One may generalize $\beta_O$ to a distribution of $\beta$s among innovators. For simplicity, we treat $\beta_O$ and $\beta_C$ as constants.} They can be interpreted as domain experts whose contents are based on their own opinion. Therefore, their data is uncorrelated with the common factor $\eta$.
\item \textbf{Rank-and-file Creators Producing Correlated Data (Type-$C$):} This type of creators share a high $\beta$, namely $\beta_C\gg 0$. They represent content creators who heavily rely on (and trust) AI tools. They are more exposed to the common factor $\eta$ due to AI tools.
\end{itemize}

The buyer sets a uniform price $p(t)$ for each time $t\ge 0$.\footnote{ While the micro-level model in \Cref{sec:model} allowed for creator-specific prices $p_i$, here we assume a single clearing price $p(t)$. We make this assumption because a creator's type is a statistical property of her content relative to the rest of the corpus, and is hard to verify at the point of acquisition (especially when web-scale data is bought through bulk licenses, platform agreements, or scraping settlements that price a source or category rather than individual creator's originality). However, this restriction is also not innocuous: Were the firm able to price by type, it could keep paying innovators their high marginal value, so the exit of innovators would not arise. Still, in the spirit of the market for lemons \citep{akerlof1970market}, we read the single price as a constraint forced by the non-verifiability of types.} Creators enter the market if the price $p(t)$ provides sufficient utility compared to their outside options. Instead of explicitly modeling the continuum of outside options, we denote the {mass of active creators} of type $j\in \{O,C\}$ attracted by price $p$ as $N_j(p)$. We assume both supplies are increasing in price ($N_j'>0$ for $j\in \{O,C\}$), and the supply of rank-and-file creators is {more elastic} to prices than that of innovators, namely
\begin{equation}\label{eq:elasticity dynamic}
\mathcal E_C(p)>\mathcal E_O(p),~\forall p>0,\quad \text{where }\mathcal E_j(p):=\frac{\mathrm d\ln N_j(p)}{\mathrm d\ln p},~\forall j\in \{O,C\}.
\end{equation}
In words, this assumption captures the asymmetric impact of technology on different types of content creators. For rank-and-file creators, AI tools greatly lower their barrier to entry, allowing the supply of correlated content to scale rapidly in response to market prices. Conversely, the production of original insights relies on domain expertise---scarce human intelligence that cannot be algorithmically amplified---making the supply of innovators less responsive to price changes.
As we prove in \Cref{lem:rho dynamic} of \Cref{app:dynamics}, \Cref{eq:elasticity dynamic} is equivalent to the {imitator-innovator ratio} $\rho(p):=\frac{N_C(p)}{N_O(p)}$ being strictly increasing ($\rho'>0$).
Finally---consistent with the static model---the effort of active creators is their best response to price $p$, namely $h(p):=p/c$.

\paragraph{Data Accumulation.}
We define $S_O(t)$ and $S_C(t)$ as the {cumulative stocks} of original and correlated data available to the model at time $t$ (dynamic counterpart to the total effort term $\sum_i h_i$ in the static model). They evolve according to capital accumulation dynamics:
\begin{equation}\label{eq:S dynamic}
\frac{\mathrm dS_j}{\mathrm dt}(t)=\underbrace{N_j\big (p(t)\big ) h\big (p(t)\big )}_{\text{Production Inflow}}-\underbrace{\delta S_j(t)}_{\text{Depreciation}},\quad \forall j\in \{O,C\},t\ge 0.
\end{equation}
The first term captures the total inflow: mass of creators $N_j$ multiplied by their production effort $h$. 
The second term is data deprecation. 
It captures the {concept drift} phenomenon in machine learning \citep{widmer1996learning}: due to the evolving nature of language and consensus, historical training data depreciates at a rate of $\delta\in (0,1)$ (which we assume is fixed and known).  Therefore, the buyer cannot rest on historical stocks; a continuous inflow of new data is necessary to maintain model performance. 
We follow \citet{farboodi2021model} in modeling data depreciation.

\paragraph{Recursive Bias and Precision.}
To capture the recursive nature of AI training and the reality that many content creators are increasingly utilizing powerful AI tools during production, we turn \Cref{eq:Bayesian mu} into a dynamic model. Let $K(t)$ denote the effective precision of the current model. Instead of treating the systematic bias $\mu_\eta$ and its uncertainty $\gamma$ as exogenous parameters, the common bias $\eta$ at time $t$, namely $\eta(t)$, is now generated as
\begin{equation}\label{eq:mu dynamic}
\mu_\eta(t)\sim \mathcal N(0,\gamma(t)),\quad \eta(t)\mid \mu_\eta(t)\sim \mathcal N(\mu_\eta(t),\sigma_\eta^2),\quad \forall t\ge 0,
\end{equation}
where the prior variance $\gamma(t)$ is linear in $K(t)$\footnote{In practice, rank-and-file creators can only access the {previous model} (whose precision is $K(t^-)$) but not the {newly generated model} (whose precision is $K(t)$). However, as time $t$ is continuous, it is unnecessary to distinguish between them.}
\begin{equation}\label{eq:gamma dynamic}
\gamma(t)=\lambda K(t),\quad \text{where }\lambda>0\text{ is publicly known}.
\end{equation}

The effective precision of the current mode $K(t)$ is in turn determined by the stocks $S_O(t)$, $S_C(t)$ and the bias $\gamma(t)$. \Cref{lem:precision common factor} in the Appendix shows that
the effective precision is given by:  
\begin{equation*}
K(\bm h) = \sum_{j=1}^N h_j - \frac{(\sum_{j=1}^N h_j \beta_j)^2}{(\sigma_\eta^2+\gamma)^{-1} + \sum_{j=1}^N h_j \beta_j^2}.
\end{equation*}
Therefore, in the dynamic case where the original data has mass $S_O(t)$ and loading $\beta_O$, while the correlated data has mass $S_C(t)$ and loading $\beta_C$, the model precision is defined via the following fixed-point problem:
\begin{align}
K(t)&=S_O(t)+S_C(t)-\frac{ \left (S_O(t)\beta_O+S_C(t)\beta_C\right )^2}{\big (\sigma_\eta^2+\gamma(t)\big )^{-1}+\big  (S_O(t)\beta_O^2+S_C(t)\beta_C^2\big )} \nonumber\\
&\approx S_O(t)+\frac{S_C(t)}{1+\big (\sigma_\eta^2 + \lambda K(t)\big )S_C(t)\beta_C^2},\quad \forall t\ge 0,\label{eq:precision dynamic}
\end{align}
where the last step uses $\beta_O\approx 0<\beta_C$ and $\gamma(t)=\lambda K(t)$ from \Cref{eq:gamma dynamic}.

\paragraph{AI firm Objective Function.}
We finally define the objective of the buyer. Given that \gai technologies are susceptible to creative destruction, we believe it is reasonable to assume that the AI firm is less patient than creators.\footnote{Alternatively, we can microfound this assumption by considering two AI firms who compete \`{a} la duopoly for consumers who have switching cost. In this environment, AI firms over-weight their near-future profit as it corresponds to higher market share and captive consumers.}
To simplify the exposition, we capture the difference in patience by modeling the AI firm as myopic. In particular, we assume that the AI firm cannot commit to long-term subsidies and focuses on instantaneous profit. As such, the AI firm sets the price $p(t)$ to myopically maximize {instantaneous profit}:
\begin{equation}\label{eq:profit dynamic}
\Pi_{\text{inst}}(t):=\underbrace{\frac{\mathrm dK}{\mathrm d t}(t)}_{\text{Precision Increase}}-\underbrace{\Big (N_O\big (p(t)\big )+N_C\big (p(t)\big )\Big ) p(t) h\big (p(t)\big )}_{\text{Purchase Expenditure}},\quad \forall t\ge 0.
\end{equation}

\subsection{Upper Bound on Market Price}

We analyze the long-run stability of this market. In a healthy market, if the stock of original data $S_O(t)$ drops, the scarcity should drive the price $p(t)$ up, thereby attracting innovators back.
However, due to the common factor $\eta$, the marginal value of correlated data {decays over time}, thus depressing the price: unlike the marginal contribution of innovators---which stays bounded away from zero ($0<\frac{\partial K}{\partial S_O}(t)\le 1$)---that of rank-and-file creators exhibits diminishing returns analogous to \Cref{obs:biasOrigin}.
Specifically, we derive an upper bound $\kappa_C(t)$ on the marginal increase in effective precision from rank-and-file data. As such, $\kappa_C(t)$ is the upper bound on the {marginal value of rank-and-file data.} 
\begin{equation}\label{eq:kappa_C dynamic}
\frac{\partial K}{\partial S_C}(t)\le \kappa_C(t):=\left (1+\big (\sigma_\eta^2+\lambda K(t)\big ) S_C(t) \beta_C^2\right )^{-2}\le 1.
\end{equation}
 \Cref{lem:precision vs data dynamic} formally derives \Cref{eq:kappa_C dynamic}. From \Cref{eq:kappa_C dynamic}, there are two forces driving down $\frac{\partial K}{\partial S_C}$. The first force is more intuitive and simply corresponds to {accumulation of more data}. As the stock of correlated data $S_C(t)$ grows, the marginal value of additional correlated data diminishes. This aligns well with standard market dynamics: scarcity commands a premium, whereas redundancy leads to price depreciation. The second force is more specific to the context of \gai model training and corresponds to the {recursive bias embedded in a more precise model}. This force is an amplification effect introduced by $K(t)$: As the model becomes more precise, the rank-and-file creators rely even more on the common bias, which in turn amplifies the bias going forward. That is, a stronger model (larger $K(t)$) effectively ``pollutes'' the rank-and-file creator data with amplified bias, which further drives down the value of rank-and-file creators' data beyond the redundancy effect.

While a standard market stabilizes under the {data accumulation} effect via price adjustment, the {recursive bias} effect makes the market unrecoverable. The AI firm---facing an overwhelming supply of rank-and-file creators whose data has vanishing marginal value---fails to raise the price $p(t)$ to attract innovators. The next proposition captures this phenomenon.

\begin{proposition}[{Upper Bound on Market Price}]\label{obs:price trap}
At any time $t\ge 0$, the market price $p(t)$ cannot exceed a threshold determined by $\kappa_C(t)$, namely $p(t)\le \bar p(\kappa_C(t))$ for some strictly increasing function $\bar p(\cdot)$.
Specifically, when $\kappa_C\to 0^+$, this upper bound $\bar p(\kappa_C)$ converges to a ``price trap'' $p^\ast$:
\begin{equation}\label{eq:price trap}
\lim_{\kappa_C\to 0^+} \bar p(\kappa_C)=p^\ast,\quad \text{where } p^\ast = \left (1+\frac{N_C(p^\ast)}{N_O(p^\ast)}\right )^{-1}.
\end{equation}
\end{proposition}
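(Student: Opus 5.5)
The plan is to write the myopic firm's instantaneous profit in \Cref{eq:profit dynamic} as a function of the single control $p$ and take its first-order condition. By the chain rule and \Cref{eq:S dynamic},
\begin{equation*}
\frac{\mathrm dK}{\mathrm dt}(t)=\frac{\partial K}{\partial S_O}\Big(N_O(p)h(p)-\delta S_O\Big)+\frac{\partial K}{\partial S_C}\Big(N_C(p)h(p)-\delta S_C\Big),
\end{equation*}
where the partial derivatives are total derivatives of the fixed point \Cref{eq:precision dynamic}, taken through implicit differentiation so that the dependence of $\gamma(t)=\lambda K(t)$ on $K$ is included. The depreciation terms do not depend on $p$, so the price only enters through $\big(\frac{\partial K}{\partial S_O}N_O(p)+\frac{\partial K}{\partial S_C}N_C(p)\big)\frac pc-(N_O(p)+N_C(p))\frac{p^2}{c}$. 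I would then use \Cref{lem:precision vs data dynamic} to bound the marginal values: $0<\frac{\partial K}{\partial S_O}\le 1$ and $\frac{\partial K}{\partial S_C}\le\kappa_C(t)$. This gives a pointwise upper envelope
\begin{equation*}
\Pi_{\text{inst}}(p)\le \frac{p}{c}\Big(N_O(p)+\kappa_C N_C(p)-p\,(N_O(p)+N_C(p))\Big)+\text{const}.
\end{equation*}
The envelope is only an upper bound on the profit, not on the maximizer. So the real comparison has to be made through the marginal condition, not through the maximizer of the envelope.

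The key step is to show that any price above a threshold strictly lowers profit. Differentiating the $p$-dependent part and dividing by $N_O(p)p/c$ gives a marginal profit proportional to
\begin{equation*}
a\,(1+\mathcal E_O)+b\,\rho(1+\mathcal E_C)-p\,(1+\rho)\Big(2+\frac{\mathcal E_O+\rho\,\mathcal E_C}{1+\rho}\Big),
\end{equation*}
where $a=\frac{\partial K}{\partial S_O}\le1$, $b=\frac{\partial K}{\partial S_C}\le\kappa_C$, and $\rho=\rho(p)$. Bounding $a\le1$ and $b\le\kappa_C$, this expression is negative whenever $p$ exceeds a value $\bar p(\kappa_C)$, defined as the smallest $p$ beyond which the bound stays negative. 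Two facts make this work. First, $\rho'>0$ by \Cref{lem:rho dynamic}. Second, $\mathcal E_C>\mathcal E_O$, so the cost coefficient grows faster in $\rho$ than the $a$-term does. Together these give monotonicity in $p$ and hence a single crossing, so the optimal price satisfies $p(t)\le\bar p(\kappa_C(t))$.

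Strict monotonicity of $\bar p$ in $\kappa_C$ would follow from the implicit function theorem. The bound is increasing in $\kappa_C$ with coefficient $\rho(1+\mathcal E_C)>0$, and decreasing in $p$ at the crossing.

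For the limit $\kappa_C\to0^+$, the $b$-term drops out and the threshold solves the limiting crossing equation in $p$ and $\rho(p)$. The claimed trap $p^\ast=(1+\rho(p^\ast))^{-1}$ is exactly where the net marginal benefit $N_O(p)\cdot1-p\,(N_O+N_C)$ vanishes. So I would define $\bar p$ via that net-benefit sign condition rather than the full first-order condition. The argument is that profitability of a price at all requires $N_O+\kappa_C N_C\ge p\,(N_O+N_C)$, i.e.\ $p\le (1+\kappa_C\rho)/(1+\rho)$. This is cleaner: at any optimum the firm's instantaneous surplus from purchases must be nonnegative, since it can choose $p\to0$. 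That gives $p\le\frac{1+\kappa_C\rho(p)}{1+\rho(p)}$. Because $\rho$ is increasing, the right side minus $p$ is strictly decreasing in $p$ for $\kappa_C<1$, so there is a unique crossing $\bar p(\kappa_C)$. It is strictly increasing in $\kappa_C$, and its limit solves $p^\ast=(1+\rho(p^\ast))^{-1}$ by continuity.

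The main obstacle is justifying the nonnegativity, or comparison, step rigorously. The marginal values $a$ and $b$ are evaluated at the current stock, while the depreciation terms are constant in $p$, so I would compare against the deviation to $p=0$. Care is also needed because $\frac{\partial K}{\partial S_C}$ includes the feedback through $\gamma=\lambda K$. I will rely on \Cref{lem:precision vs data dynamic} having already absorbed this into the bound $\kappa_C$.
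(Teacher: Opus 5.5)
Your final argument is exactly the paper's proof. You compare against the deviation $p=0$, so the price-dependent part $\frac{p}{c}\big(N_O(a-p)+N_C(b-p)\big)$ must be nonnegative. You bound $a\le 1$ and $b\le\kappa_C$ via \Cref{lem:precision vs data dynamic}, giving $p\le \frac{1+\kappa_C\rho(p)}{1+\rho(p)}=\kappa_C+\frac{1-\kappa_C}{1+\rho(p)}$, and you then take the unique crossing, its monotonicity in $\kappa_C$, and the limit by continuity. The earlier first-order-condition detour with elasticities is unnecessary and you were right to drop it: its single-crossing claim is not supported by the stated assumptions, and it would not yield the trap $p^\ast=(1+\rho(p^\ast))^{-1}$.
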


\Cref{obs:price trap} illustrates that  a smaller $\kappa_C(t)$ induces a lower market price $p(t)$. Moreover, it  characterizes the limiting behavior of the market price: it shows that as $\kappa_C(t)\to 0^+$---driven by the accumulation of correlated data or the recursive bias of a more precise model---the market price is capped at the fixed level $p^\ast$ specified in \Cref{eq:price trap}.

In particular, as the stock of correlated data accumulates (i.e., $S_C(t)\to \infty$) or the recursive bias amplifies due to powerful \gai models (i.e., $K(t)\to \infty$), the marginal value of rank-and-file creators' data in \Cref{eq:kappa_C dynamic} vanishes. Consequently, the market price $p(t)$ is asymptotically trapped below $p^\ast$, which is induced by the {imitator-innovator ratio} $\frac{N_C(p^\ast)}{N_O(p^\ast)}$.
Therefore, if widespread availability of \gai lowers the barriers to entry for rank-and-file creators, the supply of rank-and-file creators $N_C$ increases far faster than that of innovators $N_O$. This further pushes the trap level $p^\ast$ down towards zero, which creates a risk: if $p^\ast$ is insufficient to attract \textit{any} innovators (domain experts with outside options), what will happen to the \gai model?

\subsection{Barrier of Model Precision and ``Curse of Precision''}
To answer this question, we now suppose all innovators exit the market ($S_O(t)\to 0$) and see whether the buyer can survive solely on massive amounts of rank-and-file creator data ($S_C(t)\to \infty$). The widely believed machine learning ``scaling laws'' \citep{kaplan2020scaling,hoffmann2022training} suggest that the \textit{quantity} of training data can substitute for \textit{quality}: by training on a sufficiently large amount of synthetic data, the AI model might suddenly be able to discover the ground-truth.
However, we prove that the quantity of synthetic data cannot compensate for the scarcity of original data:
\begin{proposition}[Barrier of Model Precision]\label{obs:collapse}
Suppose innovators exit the market ($S_O(t) = 0$). Even if the stock of rank-and-file creator data grows indefinitely ($S_C(t) \to \infty$), the equilibrium model precision $K(t)$ converges to a finite upper bound $\bar K$ determined by the recursive bias coefficient $\lambda$ in \Cref{eq:gamma dynamic}:
\begin{equation}\label{eq:collapse}
\lim_{S_C(t) \to \infty} K(t) = \bar K := \Big (-\sigma_\eta^2 + \sqrt{\sigma_\eta^4 + 4\lambda / \beta_C^2}\Big) \Big / 2\lambda.
\end{equation}
\end{proposition}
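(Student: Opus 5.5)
With $S_O(t)=0$ we work from the fixed-point equation \Cref{eq:precision dynamic}. We use the exact form from \Cref{lem:precision common factor} with $S_O=0$, rather than the approximation, so that the limit is rigorous. For a single class with mass $S_C$ and loading $\beta_C$ this gives
\begin{equation*}
K = S_C - \frac{S_C^2\beta_C^2}{(\sigma_\eta^2+\lambda K)^{-1}+S_C\beta_C^2} = \frac{S_C}{1+(\sigma_\eta^2+\lambda K)S_C\beta_C^2}.
\end{equation*}
This identity is exact, so no approximation in $\beta_O$ is needed. The plan is to clear denominators and obtain a quadratic in $K$ whose coefficients depend on $S_C$:
\begin{equation*}
\lambda\beta_C^2 S_C K^2 + (1+\sigma_\eta^2\beta_C^2 S_C)K - S_C = 0.
\end{equation*}

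First we establish that the equilibrium precision is well defined. Define $F(K):=S_C/(1+(\sigma_\eta^2+\lambda K)S_C\beta_C^2)$ on $K\ge 0$. This map is continuous and strictly decreasing, with $F(0)>0$. Hence $K-F(K)$ is strictly increasing, negative at $0$, and tends to $+\infty$, so there is a unique nonnegative fixed point. Equivalently, the quadratic has exactly one positive root, because the product of its roots, $-S_C/(\lambda\beta_C^2S_C)<0$, is negative. We therefore identify $K(t)$ with the positive root
\begin{equation*}
K = \frac{-(1+\sigma_\eta^2\beta_C^2S_C)+\sqrt{(1+\sigma_\eta^2\beta_C^2S_C)^2+4\lambda\beta_C^2S_C^2}}{2\lambda\beta_C^2S_C}.
\end{equation*}

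Next we take $S_C\to\infty$. Dividing the numerator and denominator by $\beta_C^2S_C$ gives
\begin{equation*}
K=\frac{-(1/(\beta_C^2S_C)+\sigma_\eta^2)+\sqrt{(1/(\beta_C^2S_C)+\sigma_\eta^2)^2+4\lambda/\beta_C^2}}{2\lambda}.
\end{equation*}
The right-hand side is continuous in $u:=1/(\beta_C^2S_C)$ at $u=0$, so the limit equals $\big(-\sigma_\eta^2+\sqrt{\sigma_\eta^4+4\lambda/\beta_C^2}\big)/2\lambda=\bar K$, as claimed in \Cref{eq:collapse}.

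It is also worth checking that $K$ is increasing in $S_C$, so that $\bar K$ is a genuine upper bound and not only a limit. The expression $-u-\sigma_\eta^2+\sqrt{(u+\sigma_\eta^2)^2+a}$ with $a=4\lambda/\beta_C^2>0$ is decreasing in $u$, since its derivative is $-1+(u+\sigma_\eta^2)/\sqrt{\cdot}<0$. Hence $K<\bar K$ for every finite $S_C$.

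The main obstacle is interpretive rather than computational. We must justify that ``equilibrium model precision'' refers to the unique positive fixed point of \Cref{eq:precision dynamic}, and we must handle the dependence of $\gamma$ on $K$ consistently at each $t$. The uniqueness argument above resolves both points. A secondary point is that the proof should note explicitly that we use the exact formula with $\beta_O$ irrelevant (since $S_O=0$), which avoids relying on the approximation sign in \Cref{eq:precision dynamic}.
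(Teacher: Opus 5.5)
Your proof is correct. It rests on the same quadratic as the paper's, but you solve it and take the limit in the opposite order.

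The paper passes to the limit directly in the fixed-point identity. It writes $1 = K\bigl(S_C^{-1} + (\sigma_\eta^2+\lambda K)\beta_C^2\bigr)$, lets $S_C\to\infty$ to get $\lambda\beta_C^2\bar K^2 + \sigma_\eta^2\beta_C^2\bar K - 1 = 0$, and takes the positive root. This is shorter, but it leaves three things implicit:
\begin{itemize}
\item that $\lim K(t)$ exists;
\item that $K(t)/S_C(t)\to 0$, which is needed to drop the $S_C^{-1}$ term (it follows from $\lambda\beta_C^2K^2\le 1$, but the paper does not say so);
\item which fixed point is meant by the equilibrium precision.
\end{itemize}

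You instead solve the finite-$S_C$ quadratic $\lambda\beta_C^2 S_C K^2 + (1+\sigma_\eta^2\beta_C^2 S_C)K - S_C=0$ in closed form. This forces you to prove that the positive fixed point is unique, which you do either through the negative product of roots or through the monotonicity of $K-F(K)$. The limit then follows from continuity in $u=1/(\beta_C^2S_C)$.

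The extra work buys a complete argument: existence of the limit, a well-defined equilibrium precision at every $t$, and monotonicity in $S_C$. The monotonicity gives $K(t)<\bar K$ for every finite $S_C$, which is what justifies calling $\bar K$ an upper bound rather than only a limit.

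Your remark that the exact formula with $S_O=0$ coincides with the approximate form in \Cref{eq:precision dynamic} is also correct. So nothing is lost by sidestepping the approximation sign there.
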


\Cref{obs:collapse} captures a failure mode similar to the ``model collapse'' in empirical machine learning \citep{shumailov2023curse,alemohammad2023self}. This term typically refers to the performance degeneration when an AI model is trained on {self-generated data}---specifically, the model's output distribution loses variance and {collapses} onto a few modes (hence the name). Crucially however, our result reveals that this failure is not limited to such closed-loop training. We show that even if the buyer is continuously acquiring {fresh data} from human creators---via a market containing both innovators and rank-and-file creators---the model still hits a performance ceiling.

We call this the counterintuitive result the ``curse of precision'': a more powerful \gai model induces stronger correlation among rank-and-file creators, which drives down the market price (\Cref{obs:price trap}) and potentially hurts the model's long-run performance (\Cref{obs:collapse}). We then ask how to design a mechanism for the \gai data market to recover market efficiency and also participation.

\section{Mechanism Design: Restoring Efficiency and Participation}\label{sec:mechanism}

\Cref{sec:model,sec:dynamic} have revealed the inadequacies of existing approaches to the design of \gai data markets.
Intuitively, we showed that if AI firms can simply take human content for free, then humans are deprived of incentives to invest in creating high-quality content. 
More interestingly, we demonstrated that IP rights are insufficient to incentivize the creation of high-quality content. 
Although IP rights are meant to incentivize creativity, in fact they under-power creative incentives, especially for innovative creators. 
This market failures create a trajectory toward ``model collapse'' where the \gai model eventually starves itself of the novel human input it needs to remain robust.

In this section, we move from diagnosis to treatment. 
We propose and analyze the architecture of a {data intermediary} that uses \text{two-part pricing contracts} to restore human creative incentives while simultaneously providing \gai models with the inputs they need to improve. 

The intermediary in our solution is similar to collective management organizations (CMOs) that represent a group of content owners, such as performing rights organizations like ASCAP and BMI or reproduction rights organizations like the Copyright Clearance Center (CCC). 
However, unlike traditional copyright CMOs, whose primary function is to reduce transaction costs \citep{cotter2005some,gilbert2017collective}, { the role of our intermediary is dictated by a different problem. 
The failures we identify---under-investment and the originality penalty---arise from the AI firm's market power, and survive even when bargaining is frictionless (\Cref{obs:noInnov}); reducing transaction costs alone would therefore not restore socially optimal content creation. 
By aggregating the rights of a critical mass of creators and negotiating as a single counterparty, the intermediary counters the buyer's market power and secures creators a share of the surplus their content generates.}\footnote{If the intermediary is a private party, there can be principal-agent problems between the content creators and the intermediary. Currently, we focus on the design of a regulated intermediary and abstract from 
delegation frictions.}

In addition to aligning individual incentives with social welfare, our intermediary ensures the active participation of all parties. 
Specifically, the intermediary's economic function is twofold. 
{ First, it determines how the surplus is divided among creators. In an unintermediated market, the buyer's markdown falls most heavily on the most original creators---the originality penalty of \Cref{obs:biasOrigin}. Managing the creators as a portfolio, the intermediary instead apportions their collective compensation by each creator's contribution to the joint value of the data---using weights inspired by the Aumann--Shapley value---so that the most original creators, who contribute the most, receive correspondingly larger shares rather than the deepest cut.}

Second, the intermediary {ensures participation via adequate transfers}. 
To resolve the tension between efficient production (which requires unit prices equal to marginal production costs) and incentive preservation (which requires subsidies to meet outside options), we propose a nonlinear pricing schedule. Specifically, we implement a {two-part tariff}: a \textit{unit price} set to equate the buyer's demand with social marginal costs, thus inducing socially optimal effort, and a fixed \textit{lump-sum transfer} (licensing fees or subsidies) to redistribute the surplus back to all agents, ensuring that their participation constraints are met without distorting production incentives.

In what follows, we formally introduce our intermediated market design.
We will assume, for purposes of this analysis, that the intermediary is fully informed about the creators. 
Though this assumption is unlikely to be satisfied in the real world, it is a useful benchmark in that any inefficiencies identified here will persist in a more complex model with information asymmetries. 
Note, moreover, that the market failures we identified above persist {even in the full-information benchmark}, so the proposed solution is calibrated to the problems. 

\subsection{Institutional Design:  Intermediated Data Market}

Consider $N \geq 2$ human content creators, a single AI firm, and an intermediary with a fixed operational cost $F \ge 0$. 
The intermediary {negotiates with both sides}:
\textit{(i)} To the creators, the intermediary promises an individual {payment} $P_i(\bm h)$ conditional on the $N$  creators jointly invest efforts $\bm h=(h_1,h_2,\ldots,h_N)'.$
\textit{(ii)} To the AI firm, the intermediary offers the $N$-dimensional signal generated by all the data creators investing efforts $\bm h=(h_1,h_2,\ldots,h_N)'$, in return for a {transfer} $T(\bm h)$ to receive all signals $\bm s$.
That is, the intermediary determines two functions $\bm P\colon \mathbb R_{\ge 0}^N \to \mathbb R^N$ and $T\colon \mathbb R_{\ge 0}^N \to \mathbb R$. We call tuple $\langle \bm P,T\rangle$ a \textit{mechanism}.
As in \Cref{sec:model}, we assume the effort levels $\bm h$ are observable and contractible.

Analogous to \Cref{eq:creator payoff}, the payoff of a creator $i$ under the mechanism $\langle \bm P,T\rangle$ is specified as
\begin{equation*}
U_i(\bm h;\bm P):=P_i(\bm h)-C(h_i),\quad \forall i=1,2,\ldots,N,
\end{equation*}
where $C(h_i)=\frac c2 h_i^2$ is the production cost.
Meanwhile, the profit of the AI firm  is given as
\begin{equation*}
\Pi(\bm h;T):=K(\bm h)-T(\bm h),
\end{equation*}
where recall that $K(\bm h)$ is the effective precision induced by effort vector $\bm h$.

To capture the participation constraints of each agent, we denote the AI firm's outside option profit by $\underline \Pi\ge 0$ and each creator $i$'s outside option payoff by $\underline u_i\ge 0$. We assume all outside options are publicly known.
The intermediary seeks a mechanism $\langle \bm P,T\rangle$ satisfying:
\begin{enumerate}
\item \textbf{Incentive Compatibility (IC):} All creators invest the socially optimal efforts $\bm h^{\text{sp}}$:
\begin{align}\label{eq:IC}
\displaystyle h_i^{\text{sp}}\in \argmax_{h_i\ge 0} U_i\big ((h_i,\bm h_{-i}^{\text{sp}});\bm P\big ),\quad \forall i=1,2,\ldots,N.\tag{IC}
\end{align}
\item \textbf{Budget Balance (BB):} The intermediary breaks even (neither deficit nor surplus):\footnote{Note that our Budget Balance \eqref{eq:BB} differs from the global budget balance requirement analyzed by \citet{holmstrom1982moral}, which is known to be impossible when also requiring Incentive Compatibility \eqref{eq:IC}. Here, creators' total income $T(\bm h)$ need not equal the social value $K(\bm h)$; the difference $\Pi=K(\bm h)-T(\bm h)$ is absorbed by the AI firm, who acts as the ``budget breaker'' to extract the surplus or bear the deficit.}
\begin{align}\label{eq:BB} 
\displaystyle T(\bm h)=F+\sum_{i=1}^N P_i(\bm h),\quad \forall \bm h\in \mathbb R_{\ge 0}^N.\tag{BB}
\end{align}
\item \textbf{Individual Rationality (IR):} At the equilibrium $\bm h^{\text{sp}}$, no agent (human creator or AI firm) is worse off than their outside option:
\begin{align}\label{eq:IR}
U_i(\bm h^{\text{sp}};\bm P)\ge \underline u_i,\quad \forall i=1,2,\ldots,N;\qquad \text{and} \qquad \displaystyle \Pi(\bm h^{\text{sp}};T)\ge \underline \Pi.\tag{IR}
\end{align}
\end{enumerate}

We model the negotiation between the intermediary---on behalf of all creators---and the buyer using Nash bargaining. Let $\alpha\in [0,1]$ denote the intermediary's bargaining power (and $(1-\alpha)$ that of the buyer), which is an exogenous parameter measuring the collective bargaining power of all creators. The {Nash bargaining solution} $\langle \bm P,T\rangle$ is selected to maximize the Nash product:
\begin{equation}\label{eq:Nash bargaining}
\max \underbrace{\Bigg (\sum_{i=1}^N \left (U_i(\bm h^{\text{sp}};\bm P)-\underline u_i\right )\Bigg )}_{\text{Creators' Net Surplus}}{\vphantom{\Bigg )}}^{\alpha} \underbrace{\Bigg (\Pi(\bm h^{\text{sp}};T)-\underline \Pi\Bigg )}_{\text{Buyer's Surplus}}{\vphantom{\Bigg )}}^{1-\alpha}\quad \text{subject to \eqref{eq:IC}, \eqref{eq:IR}, and \eqref{eq:BB}}.
\end{equation}

{ We summarize the resulting sequence of play in the following definition, in parallel with \Cref{def:seq of play}.}

\begin{definition}[Sequence of Play, Intermediated Market]\label{def:seq of play intermediated}

The sequence of play is as follows:
\begin{enumerate}
\item The intermediary and the buyer bargain over the mechanism $\langle \bm P,T\rangle$ according to the Nash bargaining procedure in \Cref{eq:Nash bargaining}, where the intermediary has a bargaining power of $\alpha$. The intermediary then commits to the payment rules $\bm P$ and the transfer rule $T$.
\item Creators simultaneously choose efforts to maximize payoff $U_i(\bm h;\bm P)$. From Equation \eqref{eq:IC}, their best response must equal the social optimum $\bm h^{\text{sp}}$.
\item From the signals $\bm s$ generated by $\bm h^{\text{sp}}$, the buyer receives profit $\Pi(\bm h^{\text{sp}};T)$ and pays the transfer $T(\bm h^{\text{sp}})$, while the intermediary disburses payments $P_i(\bm h^{\text{sp}})$ and breaks even by Equation \eqref{eq:BB}.
\end{enumerate}
\end{definition}

\subsection{Linear Pricing Fails on Participation}

We start with a simple class of mechanisms---linear pricing---and show why it falls short. 
Specifically, consider the following family of mechanisms:
\begin{equation}\label{eq:linear pricing}
P_i(\bm h)=p_ih_i,\quad \forall i=1,2,\ldots,N,
\end{equation}
where $p_1,p_2,\ldots,p_N\ge 0$ are the unit prices for each creator. From \Cref{eq:BB}, the corresponding transfer function $T(\bm h)$ must be $T(\bm h)=\sum_{i=1}^N p_i h_i+F$. We begin by analyzing \Cref{eq:IC}:
\begin{lemma}[{Marginal Prices and IC}]\label{lem:IC}
Under linear pricing, the creators' best response coincides with the social optimum $\bm h^{\text{sp}}$ if and only if the unit prices are set as $p_i = \frac{\partial K}{\partial h_i}(\bm h^{\text{sp}})=ch_i^{\text{sp}}$ for all $i$.
\end{lemma}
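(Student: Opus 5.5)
Under linear pricing \eqref{eq:linear pricing}, each creator's payoff $U_i(\bm h;\bm P)=p_ih_i-\frac c2 h_i^2$ depends only on her own effort and price, not on $\bm h_{-i}$. The plan is to reduce \eqref{eq:IC} to a scalar first-order condition for each creator, and then match that condition with the planner's first-order conditions for $W$ in \Cref{eq:social welfare}.

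\textbf{Step 1 (creators' best response).} For fixed $p_i\ge 0$, the map $h_i\mapsto p_ih_i-\frac c2h_i^2$ is strictly concave on $\mathbb R_{\ge 0}$. Its unique maximizer is $h_i=p_i/c$, exactly as in the static model of \Cref{sec:model setup}. Hence $h_i^{\text{sp}}$ is a best response to $\bm h_{-i}^{\text{sp}}$ if and only if $p_i=ch_i^{\text{sp}}$. This gives the equivalence between \eqref{eq:IC} and $p_i=ch_i^{\text{sp}}$ for all $i$, since uniqueness of the maximizer rules out any other price.

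\textbf{Step 2 (identify $ch_i^{\text{sp}}$ with $\partial K/\partial h_i(\bm h^{\text{sp}})$).} By \Cref{obs:noInnov}, the planner's problem $\max_{\bm h\ge 0} K(\bm h)-\sum_i\frac c2h_i^2$ has a unique solution $\bm h^{\text{sp}}$. The KKT conditions read $\frac{\partial K}{\partial h_i}(\bm h^{\text{sp}})\le ch_i^{\text{sp}}$, with equality whenever $h_i^{\text{sp}}>0$.

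\textbf{Main obstacle: boundary creators.} The delicate case is a redundant creator with $h_i^{\text{sp}}=0$ (cf.\ \Cref{footnote:individual creator over-produce}). There the planner's condition only gives $\frac{\partial K}{\partial h_i}(\bm h^{\text{sp}})\le 0$. I would handle this with the closed form from \Cref{lem:precision common factor},
\[
K(\bm h)=\sum_j h_j-\frac{(\sum_j h_j\beta_j)^2}{(\sigma_\eta^2+\gamma)^{-1}+\sum_j h_j\beta_j^2}.
\]
Writing $A=\sum_j h_j\beta_j$ and $B=(\sigma_\eta^2+\gamma)^{-1}+\sum_j h_j\beta_j^2$, one gets
\[
\frac{\partial K}{\partial h_i}=1-\frac{2\beta_iA}{B}+\frac{\beta_i^2A^2}{B^2}=\Big(1-\frac{\beta_iA}{B}\Big)^2\ge 0.
\]
Combined with the KKT inequality, this forces $\frac{\partial K}{\partial h_i}(\bm h^{\text{sp}})=0=ch_i^{\text{sp}}$ even at the boundary. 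Therefore $\frac{\partial K}{\partial h_i}(\bm h^{\text{sp}})=ch_i^{\text{sp}}$ holds for every $i$. Together with Step 1, this proves the lemma in both directions.
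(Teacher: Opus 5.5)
Your proof is correct and follows the paper's route. The paper's proof is exactly your Step 1: each creator's payoff depends only on $h_i$, so IC holds iff $p_i = ch_i^{\text{sp}}$. The paper then simply imports $\frac{\partial K}{\partial h_i}(\bm h^{\text{sp}}) = ch_i^{\text{sp}}$ from the planner's first-order condition in \Cref{eq:FOC sp}. Your Step 2 is a useful tightening, because the paper states that condition as an equality without treating corner solutions (e.g.\ $h_2^{\text{sp}}=0$ in \Cref{remark:individual creator over-produce}). Your observation that $\frac{\partial K}{\partial h_i}=\big(1-\beta_i A/B\big)^2\ge 0$, combined with the KKT inequality, is precisely what justifies the equality at such corners.
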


Since the unit prices $\bm p$ are uniquely pinned down by \Cref{lem:IC}, the linear pricing mechanism loses the flexibility to redistribute surplus. Therefore, it fails to satisfy the participation constraints in \Cref{eq:IR}. Specifically, as shown in \Cref{obs:linear pricing}, the social surplus is almost fully used to incentivize creators' efficient production, leaving the buyer in deficit. So the intermediary, while protecting human creators' incentives, drives the \gai developer out of the market.
But there is nothing the intermediary can do if we insist on a linear pricing mechanism, as such $\langle \bm P,T\rangle$ is the unique one ensuring \Cref{eq:IC,eq:BB}. 
This {inflexibility} explains why markets failed in \Cref{sec:model,sec:dynamic}: 
Under linear regimes, it is impossible to balance all parties' incentives.

\begin{lemma}[{Linear Pricing Fails on Participation}]\label{obs:linear pricing}
There exists a unique linear pricing mechanism $\langle \bm P,T\rangle$ satisfying \Cref{eq:BB,eq:IC}. Under this mechanism, each creator $i$'s payoff is $U_i(\bm h^{\text{sp}};\bm P)=\frac c2 (h_i^{\text{sp}})^2$, while the buyer's profit is
\begin{equation*}
\Pi(\bm h^{\text{sp}};T)=K(\bm h^{\text{sp}})-\big (\nabla K(\bm h^{\text{sp}})\big )'\bm h^{\text{sp}}-F.
\end{equation*}
In the special case where $\beta_i\equiv 0$, we have $\Pi(\bm h^{\text{sp}};T)=-F$, thus violating \Cref{eq:IR} whenever $F+\underline \Pi>0$.
\end{lemma}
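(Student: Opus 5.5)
Uniqueness comes straight from \Cref{lem:IC}. Under linear pricing $P_i(\bm h)=p_ih_i$, creator $i$ maximizes $p_ih_i-\frac c2h_i^2$ over $h_i\ge0$, so her best response is $h_i=p_i/c$ whatever the others do. \eqref{eq:IC} therefore forces $p_i=ch_i^{\text{sp}}$. The social planner's first-order condition for $W(\bm h)=K(\bm h)-\sum_i\frac c2h_i^2$ (\Cref{eq:social welfare}) gives $\frac{\partial K}{\partial h_i}(\bm h^{\text{sp}})=ch_i^{\text{sp}}$. This holds with equality even at a corner $h_i^{\text{sp}}=0$, because the Kuhn--Tucker inequality $\partial K/\partial h_i\le 0$ there, together with the monotonicity $\partial K/\partial h_i\ge 0$ (adding precision never lowers the BLUE precision, from the formula in \Cref{lem:precision common factor}), forces equality. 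The price vector is thus pinned down, and \eqref{eq:BB} then pins down $T(\bm h)=F+\sum_ip_ih_i$. So the mechanism is unique, and uniqueness of $\bm h^{\text{sp}}$ itself is taken from \Cref{obs:noInnov}.

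The payoffs are then direct substitution. For creator $i$,
\[
U_i=p_ih_i^{\text{sp}}-\tfrac c2(h_i^{\text{sp}})^2=c(h_i^{\text{sp}})^2-\tfrac c2(h_i^{\text{sp}})^2=\tfrac c2(h_i^{\text{sp}})^2.
\]
For the buyer,
\[
\Pi=K(\bm h^{\text{sp}})-\sum_ip_ih_i^{\text{sp}}-F=K(\bm h^{\text{sp}})-(\nabla K(\bm h^{\text{sp}}))'\bm h^{\text{sp}}-F,
\]
using $p_i=\partial K/\partial h_i(\bm h^{\text{sp}})$.

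For the special case $\beta_i\equiv0$, the formula in \Cref{lem:precision common factor} gives $K(\bm h)=\sum_jh_j$, which is linear and homogeneous of degree one. Euler's identity $(\nabla K)'\bm h=K(\bm h)$ then yields $\Pi=-F$. Since $-F<\underline\Pi$ exactly when $F+\underline\Pi>0$, \eqref{eq:IR} fails for the buyer in that case.

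The only delicate point is the corner case in the first step: at an $i$ with $h_i^{\text{sp}}=0$ I must confirm that $\partial K/\partial h_i(\bm h^{\text{sp}})=0$, so that the price $p_i=ch_i^{\text{sp}}=0$ is consistent with the marginal-value formula. If monotonicity alone does not settle this, I would state the result with $p_i=ch_i^{\text{sp}}$ as the defining formula. Since creator $i$ contributes $p_ih_i^{\text{sp}}=0$ to both sums, such corner creators do not affect the displayed expression for the buyer's profit either way.
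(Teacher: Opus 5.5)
Your proposal is correct and follows the paper's proof essentially step for step. \Cref{lem:IC} pins down $p_i=ch_i^{\text{sp}}$, \eqref{eq:BB} pins down $T$, the payoffs follow by substitution using the planner's condition $\nabla K(\bm h^{\text{sp}})=c\bm h^{\text{sp}}$, and Euler's identity for the linear $K$ gives $\Pi=-F$. Your Kuhn--Tucker check at corners $h_i^{\text{sp}}=0$, which the paper leaves implicit, does fully settle that case, since $\partial K/\partial h_i=(1-\beta_i\Lambda(\bm h))^2\ge 0$ by \Cref{eq:K derivative common factor}; you need not hedge there.
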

All formal proofs omitted in this section are in \Cref{sec:mechanism appendix}.

\subsection{Optimal Contract: Two-Part Pricing}\label{subsec:two part}
Since linear pricing is pinned down by incentive constraints (\Cref{lem:IC}) and fails to cover fixed costs, we must introduce an additional component to ensure Individual Rationality \eqref{eq:IR}.
It turns out to be sufficient to limit attention to {two-part tariff} mechanisms \citep{oi1971disneyland}, where payment and transfer functions take the following affine forms:\footnote{A natural question is whether equipping the AI firm itself with the two-part tariff of \Cref{eq:two-part tariff} would already restore efficiency. The answer is formally affirmative but economically vacuous: The firm would elicit $\bm h^{\text{sp}}$ via the marginal price, and consequently extract \emph{all} social surplus through a negative $B_i=\underline u_i-\frac c2 (h_i^{\text{sp}})^2<0$. This forces creators to wire substantial upfront fees \emph{before} earning any production wages. On the creator side, this violates standard expectations that the \emph{buyer} of the product (especially product protected by property rights)---instead of the seller---is the one who pays. Such a reverse payment would make many sellers suspicious and unwilling to enter into the arrangement. On the firm side, this requires a credible commitment---delivering the promised wages after the fee is collected---which no profit-maximizing buyer would have an incentive to honor. { Such a scheme also leaves creators exactly at their outside options, with no share of the surplus they generate; securing them such a share is one role of the intermediary.} Finally, the affirmative answer also relies heavily on the full information on creators' $\bm \beta$ and $\bm h$, and introducing private information and/or moral hazard to the interaction between the AI firm and creators will break this result. We leave all these issues for the next version of this work.}
\begin{equation}\label{eq:two-part tariff}
P_i(\bm h)=ch_i^{\text{sp}}\times h_i+B_i,~\forall i=1,2,\ldots,N;\quad T(\bm h)=\sum_{i=1}^N ch_i^{\text{sp}}\times h_i+L.
\end{equation}
In \Cref{eq:two-part tariff}, $B_i$ is a {base subsidy} (or fee) for creator $i$, and $L$ is a {lump-sum licensing fee} paid by the buyer.
We define the {net social surplus} $\mathcal S(\bm h)$ relative to outside options as:
\begin{equation}\label{eq:surplus two part}
\mathcal S(\bm h):=\underbrace{K(\bm h)\vphantom{\Bigg ]}}_{\text{Social Value}}-\underbrace{\left [F+\sum_{i=1}^N C(h_i)\right ]}_{\text{Total Cost}}-\underbrace{\left [\underline \Pi+\sum_{i=1}^N \underline u_i\right ]}_{\text{Outside Options}},\quad \forall \bm h\in \mathbb R_{\ge 0}^N.
\end{equation}

\begin{algorithm}[t!]
\floatname{algorithm}{Mechanism}
\caption{Two-Part Tariff Mechanism}
\label{alg:mechanism}
\begin{algorithmic}[1]
\small
\Require Number of creators $N\ge 2$, production cost $c>0$, parameters $\bm \beta,\gamma,\sigma_\eta^2$ in \Cref{eq:common factor,eq:Bayesian mu}, buyer's outside option $\underline \Pi$, creators' outside options $(\underline u_i)_{i=1}^N$, and Nash bargaining power $\alpha\in [0,1]$.
\State The intermediary calculates the social optimum $\bm h^{\text{sp}}$ maximizing social welfare $W$ in \Cref{eq:social welfare}.
\State The intermediary calculates the social surplus at $\bm h^{\text{sp}}$ defined in \Cref{eq:surplus two part}, namely $\mathcal S(\bm h^{\text{sp}})$.
\State The intermediary decides the {lump-sum transfer} $L$ from the buyer as:
\algeq[eq:transfer two part]{$\displaystyle L=K(\bm h^{\text{sp}})-\sum_{i=1}^N C'(h_i^{\text{sp}})\times h_i^{\text{sp}}-(1-\alpha)\mathcal S(\bm h^{\text{sp}})-\underline \Pi.$}
\vspace{7pt}
\State The intermediary decides the {base subsidy} $B_i$ for each creator $i=1,2,\ldots,N$ as:
\algeq[eq:subsidy two part]{$\displaystyle B_i=\phi_i\alpha \mathcal S(\bm h^{\text{sp}})+\underline u_i- C(h_i^{\text{sp}}),$}
where $\bm \phi=(\phi_1,\phi_2,\ldots,\phi_N)'$ is a probability distribution (i.e., $\phi_i\ge 0$ and $\sum_{i=1}^N \phi_i=1$) defined as
\algeq[eq:AS weights]{$\displaystyle \phi_i\propto h_i^{\text{sp}}\int_0^1 \frac{\partial \mathcal S}{\partial h_i}(\tau \bm h^{\text{sp}})\,\mathrm d\tau\qquad 
\text{s.t.}\quad \sum_{i=1}^N \phi_i=1.$}
\vspace{7pt}
\State The intermediary commits to the {two-part tariff} mechanism $\langle \bm P,T\rangle$ specified by \Cref{eq:two-part tariff}.
\end{algorithmic}
\end{algorithm}

At the social optimum $\bm h^{\text{sp}}$, the surplus $\mathcal S(\bm h^{\text{sp}})$ is divided via Nash bargaining. 
The intermediary secures a share $\alpha\in [0,1]$ for all creators, while the remaining $(1-\alpha)\mathcal S(\bm h^{\text{sp}})$ goes to the buyer.
Thus, the transfer from the buyer to the intermediary, namely $T(\bm h^{\text{sp}})$, must satisfy
\begin{equation*}
K(\bm h^{\text{sp}})-T(\bm h^{\text{sp}})=\Pi(\bm h^{\text{sp}};T)=(1-\alpha) \mathcal S(\bm h^{\text{sp}})+\underline \Pi.
\end{equation*}
This uniquely determines the lump-sum licensing fee $L$ from the buyer as \Cref{eq:transfer two part}.

To distribute the creators' share $\alpha \mathcal S(\bm h^{\text{sp}})$ fairly, we define weights $\bm \phi=(\phi_1,\ldots,\phi_N)'$ inspired by the Aumann--Shapley value \citep{aumann1974values}. Specifically, in \Cref{eq:AS weights}, each $\phi_i$ is proportional to creator $i$'s cumulative contribution to the surplus along the diagonal path from $\bm 0$ to $\bm h^{\text{sp}}$.
We set each creator $i$'s utility to their fair share $\phi_i \alpha \mathcal S(\bm h^{\text{sp}})$ plus their outside option $\underline u_i$:
\begin{equation*}
ch_i^{\text{sp}}\times h_i^{\text{sp}}+B_i-C(h_i^{\text{sp}}) = U_i(\bm h^{\text{sp}};\bm P)=\phi_i\alpha \mathcal S(\bm h^{\text{sp}})+\underline u_i.
\end{equation*}
This yields the required base subsidy $B_i$ for each creator $i=1,2,\ldots,N$ as \Cref{eq:subsidy two part}.

We therefore obtain a two-part tariff mechanism $\langle \bm P,T\rangle$ satisfying \Cref{eq:IC,eq:IR,eq:BB} at the same time. Moreover, due to the Nash bargaining procedure, it also maximizes the Nash product defined in \Cref{eq:Nash bargaining} among \textit{all feasible mechanisms}, including those non-affine ones. We summarize our findings regarding \Cref{alg:mechanism} in \Cref{obs:two part}:
\begin{proposition}[Two-Part Tariff Attains Efficiency and Participation]\label{obs:two part}
If $\mathcal S(\bm h^{\text{sp}})>0$, i.e., the market is efficient, the two-part tariff mechanism $\langle \bm P,T\rangle$ defined in \Cref{alg:mechanism} satisfies:
\begin{enumerate}
\item \textbf{Incentive Compatibility \eqref{eq:IC}}: creators' best response is $\bm h^{\text{sp}}$.
\item \textbf{Global Budget Balance \eqref{eq:BB}:} the intermediary strictly breaks even.
\item \textbf{Individual Rationality \eqref{eq:IR}:} all participants earn at least their outside options.
\end{enumerate}
Furthermore, this $\langle \bm P,T\rangle$ maximizes the Nash product in \Cref{eq:Nash bargaining} among \emph{all} feasible mechanisms satisfying \Cref{eq:IC,eq:IR,eq:BB}.
\end{proposition}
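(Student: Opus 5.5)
The proof verifies the three constraints directly from the affine form in \Cref{eq:two-part tariff} and the constants in \Cref{eq:transfer two part,eq:subsidy two part}. It then solves the Nash bargaining problem \Cref{eq:Nash bargaining} over the set of all feasible mechanisms.

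\textbf{IC.} Under \Cref{eq:two-part tariff}, creator $i$'s payoff is $ch_i^{\text{sp}}h_i+B_i-\frac c2 h_i^2$. This does not depend on $\bm h_{-i}$ and is strictly concave in $h_i$, with unique maximizer $h_i=h_i^{\text{sp}}$, exactly as in \Cref{lem:IC}. The subsidy $B_i$ is a constant, so it does not affect the argmax. Here we use that $h_i^{\text{sp}}$ is well defined and unique, by \Cref{obs:noInnov}.

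\textbf{BB.} This holds identically in $\bm h$, not only at $\bm h^{\text{sp}}$. Summing \Cref{eq:two-part tariff} gives $T(\bm h)-\sum_i P_i(\bm h)=L-\sum_i B_i$. So it suffices to show $L-\sum_i B_i=F$. Substitute \Cref{eq:transfer two part,eq:subsidy two part} and use $\sum_i\phi_i=1$ and $C'(h)h=2C(h)$:
\begin{equation*}
L-\sum_i B_i=K(\bm h^{\text{sp}})-\sum_i C(h_i^{\text{sp}})-\mathcal S(\bm h^{\text{sp}})-\underline\Pi-\sum_i\underline u_i .
\end{equation*}
By the definition \Cref{eq:surplus two part}, this equals $F$.

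\textbf{IR.} The computation preceding the proposition gives $U_i(\bm h^{\text{sp}};\bm P)=\phi_i\alpha\mathcal S(\bm h^{\text{sp}})+\underline u_i$ and $\Pi(\bm h^{\text{sp}};T)=(1-\alpha)\mathcal S(\bm h^{\text{sp}})+\underline\Pi$. Both are at least the outside options because $\mathcal S(\bm h^{\text{sp}})>0$, $\alpha\in[0,1]$ and $\phi_i\ge 0$.

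The step that needs care is $\phi_i\ge 0$, i.e.\ that \Cref{eq:AS weights} is a genuine probability vector. The normalizing sum is $\sum_i h_i^{\text{sp}}\int_0^1\partial_i\mathcal S(\tau\bm h^{\text{sp}})\,d\tau=\mathcal S(\bm h^{\text{sp}})-\mathcal S(\bm 0)$, which is positive since $\mathcal S(\bm 0)=-F-\underline\Pi-\sum_i\underline u_i\le 0<\mathcal S(\bm h^{\text{sp}})$. For the individual terms I would write $\partial_i\mathcal S=\partial_i K-ch_i$ and argue as follows. Each $K$ is concave along the ray, and $\partial_i K(\bm 0)=1$ while $\partial_iK(\bm h^{\text{sp}})=ch_i^{\text{sp}}$. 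The integrand $\partial_iK(\tau\bm h^{\text{sp}})-c\tau h_i^{\text{sp}}$ therefore starts at $1$ and ends at $0$. If needed, I would invoke the non-redundancy condition of \Cref{footnote:non-redundant} to keep $\partial_i K$ along the path at least $\tau ch_i^{\text{sp}}$. If nonnegativity fails in degenerate cases, the fallback is to note that the first three claims only require some probability vector $\bm\phi$, and any such choice works.

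\textbf{Nash optimality.} Fix any feasible $\langle\bm P,T\rangle$. BB at $\bm h^{\text{sp}}$ gives
\begin{equation*}
\sum_i\big(U_i-\underline u_i\big)+\big(\Pi-\underline\Pi\big)=K(\bm h^{\text{sp}})-F-\sum_iC(h_i^{\text{sp}})-\underline\Pi-\sum_i\underline u_i=\mathcal S(\bm h^{\text{sp}}).
\end{equation*}
So every feasible mechanism splits the same fixed pie, and IR makes both shares nonnegative. Maximizing $x^\alpha(\mathcal S-x)^{1-\alpha}$ over $x\in[0,\mathcal S]$ gives $x=\alpha\mathcal S$; the endpoint cases $\alpha\in\{0,1\}$ are handled directly. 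The mechanism of \Cref{alg:mechanism} attains exactly this split, so it is a maximizer among all feasible mechanisms, including non-affine ones.
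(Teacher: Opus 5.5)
Your proof is correct, and for IC, BB and IR it is the same direct verification the paper carries out. The differences are in what you add. The paper checks BB only at $\bm h^{\text{sp}}$. Your observation that $T(\bm h)-\sum_i P_i(\bm h)=L-\sum_i B_i$ is constant in $\bm h$ gives \eqref{eq:BB} for every $\bm h$, which is how the constraint is actually stated.

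More substantively, the paper's appendix proof never addresses the ``Furthermore'' clause. The text before the proposition simply takes $\Pi(\bm h^{\text{sp}};T)=(1-\alpha)\mathcal S(\bm h^{\text{sp}})+\underline\Pi$ as the Nash outcome. Your fixed-pie identity supplies the missing argument: BB at $\bm h^{\text{sp}}$ makes the two net surpluses sum to $\mathcal S(\bm h^{\text{sp}})$ for every feasible mechanism, and IR confines each to $[0,\mathcal S(\bm h^{\text{sp}})]$. It also shows that the Nash product does not depend on how $\bm\phi$ divides the creators' share.

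On $\phi_i\ge 0$, the paper just stipulates in Mechanism~\ref{alg:mechanism} that $\bm\phi$ is a probability vector and uses this silently in the IR step. Your concavity remark does not settle the sign by itself: concavity along the ray controls $\nabla K(\tau\bm h^{\text{sp}})'\bm h^{\text{sp}}$, not the individual partials. The non-redundancy route does work. Since $\Lambda(\tau\bm h^{\text{sp}})$ is increasing in $\tau$, the condition $\beta_i\Lambda(\bm h^{\text{sp}})<1$ gives $\partial_iK(\tau\bm h^{\text{sp}})=(1-\beta_i\Lambda(\tau\bm h^{\text{sp}}))^2\ge ch_i^{\text{sp}}\ge c\tau h_i^{\text{sp}}$.

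That hypothesis is genuinely needed. Take $c=(\sigma_\eta^2+\gamma)^{-1}=1$, one creator with $\beta_i=0.1$, and about $9000$ creators with $\beta_j=0.01$. Then $\Lambda(\bm h^{\text{sp}})\approx 30$, so $\beta_i\Lambda(\bm h^{\text{sp}})=3$ and $h_i^{\text{sp}}=4$. Here $\int_0^1\partial_iK(\tau\bm h^{\text{sp}})\,\mathrm d\tau\approx 1.22<2=ch_i^{\text{sp}}/2$, so the Aumann--Shapley weight is negative while $\mathcal S(\bm h^{\text{sp}})>0$. IR then fails for that creator whenever $\alpha>0$. So either non-redundancy or your fallback (an arbitrary probability vector $\bm\phi$) has to be read into the statement.
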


\subsection{Remaining Challenges and Legal Implementation}
In this section, we lay out some remaining questions and challenges to be addressed as we continue to refine our market design proposal.
As discussed above, the need for our solution emerges out of the failure of the two options usually available to courts.
Neither a system of blanket immunity via fair use nor a legal regime of full control over access to content by copyright owners will be sufficient to preserve creative incentives, so a workable solution must transcend the traditional copyright binary. 
Although our diagnosis of market failures in this context is novel, our proposal for going beyond the common copyright binary is by no means without precedent. 
The US Copyright Act (to say nothing of other countries' laws) contains a number of such ``intermediate'' regimes.\footnote{Examples include: 17 U.S.C. \S\,111 (governing secondary transmissions of broadcast programming by cable); 17 U.S.C. \S\,112 (governing ephemeral recordings); 17 U.S.C. \S\,114 (governing digital audio transmission of sound recordings); 17 U.S.C. \S\,115 (governing compulsory licenses to make and distribute phonorecords of nondramatic musical works); 17 U.S.C. \S\,118 (governing noncommercial broadcasts); 17 U.S.C. \S\,119 (governing secondary transmissions of distant TV station signals by satellite carriers); 17 U.S.C. \S\,122 (governing secondary transmissions of local TV station signals by satellite carriers).} 
We plan to draw lessons from these regimes by understanding their precise legal workings, the political and economic forces that shaped them, and their policy successes and failures.
Although this research is only just beginning, it's useful preliminarily to point out some comparisons between these precedents and the \gai context we target. 

Broadly speaking, existing intermediate regimes share two features with the present \gai context. 
First, like the intermediate regime we propose, existing intermediate regimes were often responses to a disruptive technology that upended settled ways of doing things \citep{shahshahani2018}.
Second, consistent with our intuitions and preliminary results about the role of data intermediaries, these legislative compromises often involve organizations to coordinate collective action among copyright holders (or, occasionally, among other stakeholders). 
These organizations include the Mechanical Licensing Collective, SoundExchange, the Copyright Clearance Center, AP Images, Artists Rights Society, reprographic rights organizations, ASIP, and performance rights organizations such as ASCAP, BMI, SESAC, and Global Music Rights.
Some of these organizations are created or sanctioned by legislation (or by regulations enacted pursuant to legislation), and some of them are  outgrowths of ``organic'' coordination among copyright holders or other stakeholders. 

Notwithstanding these similarities, existing intermediate regimes differ from our proposed regime in one crucial respect: 
Under existing regimes, the primary function of intermediaries is often to reduce transaction costs and smoothen frictions in collective action---exactly the sort of problems which one would intuit to be at the root of bargaining breakdown, but which were \textit{not} the main mechanism driving the failure of the property-rights model in our theoretical framework (see \Cref{obs:noInnov} and surrounding discussion). 
Given that, in our model, market failures happen even in the absence of garden-variety transaction costs, the function of intermediaries in our market design must { do more than reduce transaction costs}.
Hence the foregoing discussion of mechanisms to { counter the buyer's market power} and subsidize originality.

Beyond highlighting these similarities and differences, researching existing intermediate regimes points up two important questions of institutional design.
The first question relates to the {compulsory or voluntary} character of participation in an intermediary organization:
Once the intermediary is constructed, can its benefits be realized by content creators' uncoordinated, self-interested equilibrium behavior or do collective-action problems require centralized enforcement and a legal obligation to join?  
This question implicates issues of collective action that have been of perennial interest in political economy at least since \cite{olson1965}.\footnote{
From a legal perspective, it is important to note that the facilitation of collective action by means of intermediaries probably requires a statutory exemption from antitrust liability, and the aforementioned intermediated regimes under the Copyright Act generally contain such exemptions.}

Another important question is that of {price setting}. 
It is widely understood that, in many contexts, a centralized system is not as good at determining the value of assets as the decentralized information-aggregating mechanism of the market \citep{hayek1945}. 
The way an intermediate regime, such as a compulsory-licensing scheme, sets the price at which an exchange must take place is therefore an important ingredient of the regime's success or failure.
Pricing content may be easy if that content has a market outside the use for which the compulsory license is sought, in which case one can piggyback on the market price in setting the license fee. 
But if the primary market for the content is the use for which the compulsory license is sought, then there is no ready recourse to the market price signal. 
In the context of a data market for AI, pricing may not be difficult in the beginning because the data in question have had a long history of consumer demand predating the advent of AI, which history can be relied upon for pricing; however, going forward, if we anticipate the primary (direct) market for content to be as fodder for \gai models, then pricing becomes more difficult. 
The question has at least two aspects: (1) determining the price at which the intermediary sells the content to the AI firm, (2) determining how to apportion the price received by the intermediary among different content creators. 
Our analysis assumes that the originality and long-term value of content is \textit{known} to the intermediary, but the elicitation of such information (assuming it is even {privately} available) is a central difficulty of centralized mechanisms with which our solution must contend. 

\section{Conclusion}

This paper studies the economics of data markets for \gai. { The force that depresses content creation is the market power of AI firms; the statistical substitutability of human content---the correlation across creators' outputs---does not create this shortfall but determines who bears it.} In static markets, we show that the interaction between correlation and AI firms' market power generates an originality penalty that disproportionately harms creators producing uncorrelated and socially valuable content, thereby weakening incentives precisely for the creators with the highest marginal contribution. Extending the analysis to a dynamic setting with AI-assisted creation, we identify a curse of precision: improvements in \gai models induce increasingly homogenized data production, creating an upper bound on market prices and potentially causing innovative creators to exit. As a result, even a continued inflow of fresh human data may fail to prevent long-run model collapse absent an appropriate market design.

To address these failures, we propose a market design with two components, institutional design and contract design. In particular, we propose a data intermediary that uses a two-part tariff contract that implements the social optimum while satisfying participation constraints and maintaining budget balance. The mechanism restores long-run participation for both model performance and human creativity by aligning incentives for innovation with technological progress. We conclude by highlighting two remaining challenges for implementation: whether participation in intermediary organizations can emerge through decentralized equilibrium behavior or instead requires centralized enforcement and statutory coordination, and how intermediaries can estimate creators' marginal contributions in environments where outside markets for content are thin or nonexistent. Integrating the proposed mechanism with data valuation techniques therefore remains an important direction for future research.


\bibliographystyle{plainnat}
\bibliography{references}

\onecolumn
\appendix
\renewcommand{\appendixpagename}{\centering \LARGE Technical Appendices}
\appendixpage

\startcontents[section]
\printcontents[section]{l}{1}{\setcounter{tocdepth}{2}}

\section{Proofs for \Cref{sec:model}}\label{app:incentives}

In this section, we first solve the market equilibrium resulting from \Cref{def:seq of play} to derive the first-order conditions specifying $\bm h^\ast$. Likewise, we also solve the first-order conditions for the first best solution $\bm h^{\text{sp}}$. We then prove \Cref{obs:noInnov,obs:biasOrigin} based on these conditions. Finally, in \Cref{lem:minimax}, we present an alternative robust minimax optimization view of our Bayesian model \Cref{eq:Bayesian mu}.

\subsection{First-Order Conditions}
\begin{lemma}[First-Order Conditions for $\bm h^\ast$ and $\bm h^{\text{sp}}$]\label{lem:FOC common factor}
Define
\begin{equation}\label{eq:FOC lambda}
\Lambda(\bm h):=\frac{\sum_{j=1}^N h_j \beta_j}{(\sigma_\eta^2+\gamma)^{-1} + \sum_{j=1}^N h_j \beta_j^2}.
\end{equation}
Then the first-order conditions of the buyer and social planner give
\begin{equation}\label{eq:effort using Lambda}
h_i^\ast=\frac{1}{2c}\big (1-\beta_i \Lambda(\bm h^\ast)\big )^2,\quad h_i^{\text{sp}}=\frac{1}{c}\big (1-\beta_i \Lambda(\bm h^{\text{sp}})\big )^2,\quad \forall i=1,2,\ldots,N.
\end{equation}
\end{lemma}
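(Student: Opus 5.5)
The plan is to first write $K(\bm h)$ in closed form and then differentiate it. The signals are $s_i = X + \varepsilon_i + \beta_i\eta$. Marginalizing over $\mu_\eta$ gives $\eta\sim\mathcal N(0,\sigma_\eta^2+\gamma)$, so the noise vector has covariance
\[
\Sigma=D+v\bm\beta\bm\beta',\qquad D=\mathrm{diag}(h_i^{-1}),\quad v=\sigma_\eta^2+\gamma.
\]
The BLUE of $X$ with an unbiasedness constraint has variance $(\bm 1'\Sigma^{-1}\bm 1)^{-1}$, so $K(\bm h)=\bm 1'\Sigma^{-1}\bm 1$. The Sherman--Morrison formula gives
\[
\Sigma^{-1}=D^{-1}-\frac{D^{-1}\bm\beta\bm\beta'D^{-1}}{v^{-1}+\bm\beta'D^{-1}\bm\beta}.
\]
This yields the closed form stated later as \Cref{lem:precision common factor}:
\[
K(\bm h)=\sum_j h_j-\frac{(\sum_j h_j\beta_j)^2}{v^{-1}+\sum_j h_j\beta_j^2}.
\]
If $h_i=0$ for some $i$, the expression still holds by continuity, since that signal simply drops out.

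Next I compute the partial derivative. Write $A=\sum_j h_j\beta_j$ and $B=v^{-1}+\sum_j h_j\beta_j^2$, so $\Lambda=A/B$. Then
\[
\frac{\partial K}{\partial h_i}=1-\frac{2A\beta_i}{B}+\frac{A^2\beta_i^2}{B^2}=\bigl(1-\beta_i\Lambda(\bm h)\bigr)^2 .
\]
This perfect square is the key identity.

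Then I use the first-order conditions. The planner maximizes $W=K-\sum_j\frac c2h_j^2$, so the condition $\partial K/\partial h_i=ch_i$ gives $h_i^{\text{sp}}=\frac1c(1-\beta_i\Lambda(\bm h^{\text{sp}}))^2$. The buyer maximizes $\Pi=K-\sum_j ch_j^2$, so the condition $\partial K/\partial h_i=2ch_i$ gives $h_i^\ast=\frac1{2c}(1-\beta_i\Lambda(\bm h^\ast))^2$.

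The boundary $h_i\ge0$ causes no trouble. Because the marginal value $(1-\beta_i\Lambda)^2$ is nonnegative, the stationarity condition automatically produces a nonnegative $h_i$. At $h_i=0$ the marginal cost is $0$, so the KKT condition forces $h_i=0$ exactly when $\beta_i\Lambda=1$, and this case is still covered by the formula.

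The main obstacle is justifying that these stationary points are the maximizers, i.e. that the first-order conditions characterize $\bm h^\ast$ and $\bm h^{\text{sp}}$. My first attempt is to show that $K$ is concave in $\bm h$. Since $1/K$ is the minimal variance over unbiased linear weights $w$, we can write
\[
K(\bm h)=\max_{w:\,w'\bm 1=1}\frac{1}{\sum_j w_j^2/h_j+v(w'\bm\beta)^2}.
\]
Concavity is not immediate from this form. An alternative is to view $K=\bm 1'\Sigma^{-1}\bm 1$ as the Fisher information of a jointly Gaussian model, $K=\sum_j h_j-A^2/B$, and use that $A^2/B$ is jointly convex in $(A,B)$ for $B>0$. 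Since $A$ and $B$ are affine in $\bm h$, the map $\bm h\mapsto A^2/B$ is convex, so $K$ is linear minus convex, hence concave. The objectives are therefore strictly concave because of the quadratic cost, and the KKT conditions are necessary and sufficient. Uniqueness, which \Cref{obs:noInnov} needs later, follows from the same argument.
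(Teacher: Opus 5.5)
Your proposal is correct and follows essentially the same route as the paper. Both derive the closed form of $K(\bm h)$ via Sherman--Morrison, differentiate it to obtain the perfect square $\partial K/\partial h_i=(1-\beta_i\Lambda(\bm h))^2$, and read off the conditions $\partial K/\partial h_i=2ch_i$ and $\partial K/\partial h_i=ch_i$; your concavity argument (quadratic-over-linear composed with affine maps) is the one the paper gives in the proof of \Cref{obs:noInnov}, and your KKT treatment of the boundary $h_i\ge 0$ is a small addition the paper leaves implicit.
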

\begin{proof}
For the buyer who maximizes profit $\Pi$ (recall \Cref{eq:profit}), namely
\begin{equation*}
\Pi(\bm h)=K(\bm h)-\sum_{i=1}^N h_i p_i=K\left (\bm h\right )-\sum_{i=1}^N \frac{p_i^2}{c},
\end{equation*}
the first-order condition gives
\begin{equation*}
\frac{\partial}{\partial h_i} K(\bm h^\ast)=2c h_i^\ast,\quad \forall i=1,2,\ldots,N.
\end{equation*}
Differentiating the effective precision $K(\bm h)$, whose calculation is deferred to \Cref{lem:precision common factor}, yields
\begin{equation}\label{eq:K derivative common factor}
\frac{\partial}{\partial h_i} K(\bm h)=\left (1 - \beta_i \frac{\sum_{j=1}^N h_j \beta_j}{(\sigma_\eta^2+\gamma)^{-1} + \sum_{j=1}^N h_j \beta_j^2}\right )^2,\quad \forall i=1,2,\ldots,N.
\end{equation}
Therefore, the equilibrium outcome $\bm h^\ast=(h_1^\ast,h_2^\ast,\ldots,h_N^\ast)$ must satisfy
\begin{equation*}
\left (1-\beta_i\frac{\sum_{j=1}^N h_j^\ast\beta_j}{(\sigma_\eta^2+\gamma)^{-1}+\sum_{j=1}^N h_j^\ast\beta_j^2}\right )^2=2ch_i^\ast,\quad \forall i=1,2,\ldots,N.
\end{equation*}

For the social planner who maximizes social welfare $W$ (recall \Cref{eq:social welfare}), namely
\begin{equation*}
W(\bm h)=\Pi(\bm h)+\sum_{i=1}^N U_i(h_i;p_i(\bm h))=K(\bm h)-\sum_{i=1}^N \frac c2 h_i^2,
\end{equation*}
the first-order condition gives
\begin{equation}\label{eq:FOC sp}
\frac{\partial}{\partial h_i} K(\bm h^{\text{sp}})=c h_i^{\text{sp}},\quad \forall i=1,2,\ldots,N.
\end{equation}
Equivalently, this suggests
\begin{equation*}
\left (1-\beta_i\frac{\sum_{j=1}^N h_j^{\text{sp}}\beta_j}{(\sigma_\eta^2+\gamma)^{-1}+\sum_{j=1}^N h_j^{\text{sp}}\beta_j^2}\right )^2=c h_i^{\text{sp}},\quad \forall i=1,2,\ldots,N.
\end{equation*}

By definition of $\Lambda(\bm h)$ in \Cref{eq:FOC lambda}, we derive
\begin{equation*}
h_i^\ast=\frac{1}{2c}\big (1-\beta_i \Lambda(\bm h^\ast)\big )^2,\quad h_i^{\text{sp}}=\frac{1}{c}\big (1-\beta_i \Lambda(\bm h^{\text{sp}})\big )^2,\quad \forall i=1,2,\ldots,N,
\end{equation*}
therefore proving \Cref{eq:effort using Lambda}. In any market with at least one positive $\beta_i$ (the other case is trivial since every creator has the same $\beta_i$), $\Lambda(\bm h^\ast)$ and $\Lambda(\bm h^{\text{sp}})$ are both strictly positive.
\end{proof}

\begin{lemma}[Effective Precision]\label{lem:precision common factor}
Under \Cref{eq:common factor,eq:Bayesian mu}, the precision $K(\bm h)$ induced by creators' effort vector $\bm h=(h_1,h_2,\ldots,h_N)'$ is:
\begin{equation*}
K(\bm h) = \sum_{j=1}^N h_j - \frac{(\sum_{j=1}^N h_j \beta_j)^2}{(\sigma_\eta^2+\gamma)^{-1} + \sum_{j=1}^N h_j \beta_j^2}.
\end{equation*}
\end{lemma}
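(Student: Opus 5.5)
The plan is to recognize the buyer's problem in \Cref{eq:MSE} as a generalized least squares problem with a rank-one-plus-diagonal noise covariance, and then invert that covariance in closed form. First I will integrate out the hierarchical layer in \Cref{eq:Bayesian mu}. Unconditionally, $\eta\sim\mathcal N(0,v)$ with $v:=\sigma_\eta^2+\gamma>0$, since $\mu_\eta$ and $\eta-\mu_\eta$ are independent Gaussians. Stacking the signals in \Cref{eq:common factor} then gives
\begin{equation*}
\bm s=X\bm 1+\bm\xi,\qquad \bm\xi\sim\mathcal N(\bm 0,\Sigma),\qquad \Sigma=D^{-1}+v\,\bm\beta\bm\beta',\qquad D=\operatorname{diag}(h_1,\ldots,h_N),
\end{equation*}
where $X$ is treated as a fixed unknown parameter, as the unbiasedness constraint in \Cref{eq:MSE} indicates.

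Second, I will invoke the Gauss--Markov/Aitken theorem. Among linear unbiased estimators $\hat X=\bm w'\bm s$ with $\bm w'\bm 1=1$, variance is minimized by $\bm w\propto\Sigma^{-1}\bm 1$. The minimal variance is $(\bm 1'\Sigma^{-1}\bm 1)^{-1}$, so $K(\bm h)=\bm 1'\Sigma^{-1}\bm 1$. Under Gaussianity this linear estimator is also optimal among all unbiased estimators, because it attains the Cram\'er--Rao bound: the Fisher information for $X$ is exactly $\bm 1'\Sigma^{-1}\bm 1$. This matches the paper's reading of $K$ as Fisher information.

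Third, I will apply the Sherman--Morrison formula to the rank-one update:
\begin{equation*}
\Sigma^{-1}=D-\frac{D\bm\beta\bm\beta' D}{v^{-1}+\bm\beta' D\bm\beta}.
\end{equation*}
Sandwiching this between $\bm 1'$ and $\bm 1$ gives $\bm 1'D\bm 1=\sum_j h_j$ and $\bm 1'D\bm\beta=\sum_j h_j\beta_j$. These yield exactly the claimed expression. The denominator is positive because $v^{-1}>0$ and $\sum_j h_j\beta_j^2\ge 0$.

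The main obstacle is boundary and degenerate cases rather than the algebra. When some $h_j=0$, $D^{-1}$ is undefined because that creator's signal has infinite variance. I will handle this by dropping those creators, noting that their terms vanish in both sums of the formula. Equivalently, I can take $h_j\downarrow 0$ and use continuity of $\bm 1'\Sigma^{-1}\bm 1$ in the precision matrix form, since the Sherman--Morrison expression is polynomial over a positive denominator in $\bm h$. If all $h_j=0$, the formula gives $K=0$, consistent with no information. I will also remark that $K$ is well defined, and equivalently that the weights are finite, precisely because $v<\infty$ and $v^{-1}>0$.
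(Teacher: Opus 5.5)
Your proposal is correct and follows essentially the same route as the paper. Both write the noise covariance as $\bm D^{-1}+(\sigma_\eta^2+\gamma)\bm\beta\bm\beta'$, identify $K(\bm h)=\bm 1'\bm\Sigma^{-1}\bm 1$ via the Aitken/BLUE formula, evaluate it with Sherman--Morrison, and extend to zero efforts by continuity; your explicit marginalization of $\mu_\eta$ and the Cram\'er--Rao remark simply make explicit what the paper leaves implicit.
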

\begin{proof}
The game proceeds as described in \Cref{def:seq of play}. We first take $h_i>0$ for all $i$; the final precision formula extends to zero effort by continuity. In the final stage of data aggregation, the buyer observes $\bm h$ and signals $\bm s$. From \Cref{eq:common factor,eq:Bayesian mu}, the signals are generated as
\begin{equation*}
s_i=X+\beta_i \eta+\varepsilon_i,\quad \varepsilon_i\sim \mathcal N(0,h_i^{-1}),\quad \mu_\eta\sim \mathcal N(0,\gamma),\quad \eta\mid \mu_\eta\sim \mathcal N(\mu_\eta,\sigma_\eta^2).
\end{equation*}
The noise vector $\bm \xi$ has mean $\bm 0$ and covariance matrix:
\begin{equation*}
\bm \Sigma = \E[\bm \xi \bm \xi'] = \bm D^{-1}+(\sigma_\eta^2+\gamma) \bm \beta \bm \beta',
\end{equation*}
where $\bm D = \operatorname{diag}(h_1, \dots, h_N)$ and $\bm \beta = (\beta_1, \dots, \beta_N)'$.
Given this Gaussian model, standard statistics \citep{aitken1936iv} gives the Best Linear Unbiased Estimator (BLUE) optimizing \Cref{eq:MSE} as
\begin{equation}\label{eq:BLUE Bayesian}
\hat X=(\bm 1' \bm \Sigma^{-1} \bm 1)^{-1} \bm 1' \bm \Sigma^{-1} \bm s,\quad \text{where }\bm \Sigma = \bm D^{-1} + (\sigma_\eta^2+\gamma) \bm \beta \bm \beta'.
\end{equation}
Using the {Sherman-Morrison-Woodbury identity} \citep{sherman1950adjustment,woodbury1950inverting},
\begin{equation*}
\bm \Sigma^{-1} = \left (\bm D^{-1} + \big ((\sigma_\eta^2+\gamma) \bm \beta\big ) \bm \beta'\right )^{-1}= \bm D - \frac{(\sigma_\eta^2+\gamma) \bm D \bm \beta \bm \beta' \bm D}{1 + (\sigma_\eta^2+\gamma) \bm \beta' \bm D \bm \beta}.
\end{equation*}
Substituting this into the precision formula $K(\bm h) = \bm 1' \bm \Sigma^{-1} \bm 1$, we derive
\begin{equation}\label{eq:precision general}
K(\bm h) = \bm 1' \left( \bm D - \frac{(\sigma_\eta^2+\gamma) \bm D \bm \beta \bm \beta' \bm D}{1 + (\sigma_\eta^2+\gamma) \bm \beta' \bm D \bm \beta} \right) \bm 1 = \sum_{i=1}^N h_i - \frac{(\sum_{i=1}^N h_i \beta_i)^2}{(\sigma_\eta^2+\gamma)^{-1} + \sum_{i=1}^N h_i \beta_i^2}.
\end{equation}
This completes the proof.
\end{proof}

\subsection{Underpowered Creative Incentives (Proof of \Cref{obs:noInnov})}
\begin{proof}[Proof of \Cref{obs:noInnov}]

To prove the uniqueness of $\bm h^\ast$ and $\bm h^{\text{sp}}$, we claim $K(\bm h)$ is concave. This is because the first term $\sum_{i=1}^N h_i$ is linear, and the second term $\big (\sum_{i=1}^N h_i \beta_i\big )^2\big /\big ((\sigma_\eta^2+\gamma)^{-1} + \sum_{i=1}^N h_i \beta_i^2\big )$, a quadratic-over-linear function in $\bm h$, is also convex \citep[p. 73]{boyd2004convex}.
Given the production cost $C(h_i)=\frac c2 h_i^2$ is strictly convex, both profit $\Pi$ and social welfare $W$, defined as
\begin{equation*}
\Pi(\bm h)=K(\bm h)-\sum_{i=1}^N ch_i^2,\quad W(\bm h)=K(\bm h)-\sum_{i=1}^N \frac c2 h_i^2
\end{equation*}
are strictly concave in $\bm h$. Therefore, the market equilibrium $\bm h^\ast$ maximizing $\Pi$ and the social optimum $\bm h^{\text{sp}}$ maximizing $W$ are both unique, as claimed.

We now compare $K(\bm h^\ast)$ to $K(\bm h^{\text{sp}})$. For any $y\ge 0$, we define the minimal total production cost required to attain an overall precision $y$ as $C_{\min}(y)$, namely
\begin{equation*}
C_{\min}(y):=\min_{\bm h\in \mathbb R_{\ge 0}^N} \frac c2 \sum_{i=1}^N h_i^2\text{ s.t. }K(\bm h)\ge y.
\end{equation*}

Since $K(\bm h)$ is strictly concave and $\sum_{i=1}^N h_i^2$ is strictly convex, $C_{\min}(y)$ is convex. We thus rewrite the buyer's and the social planner's optimization problems in w.r.t. $\Pi$ and $W$ as:
\begin{align*}
\max_{\bm h} \Pi(\bm h)&=\max_{\bm h} \left (K(\bm h)-c\sum_{i=1}^N h_i^2\right )=\max_{y} \left (y-2C_{\min}(y)\right ),\\
\max_{\bm h} W(\bm h)&=\max_{\bm h} \left (K(\bm h)-\frac c2\sum_{i=1}^N h_i^2\right )=\max_{y} \left (y-C_{\min}(y)\right ).
\end{align*}
The first-order conditions reveal $C_{\min}'(y^\ast)=\frac 12$ and $C_{\min}'(y^{\text{sp}})=1$. Since $C_{\min}$ is strictly convex, $C_{\min}'$ is strictly increasing and thus $y^\ast<y^{\text{sp}}$. This implies $K(\bm h^\ast)=y^\ast<y^{\text{sp}}=K(\bm h^{\text{sp}})$.
\end{proof}

\begin{remark}[Individual Creators May Over-Produce]\label{remark:individual creator over-produce}
\Cref{obs:noInnov} compares the overall effective precision $K$ under $\bm h^\ast$ and $\bm h^{\text{sp}}$. The stronger per-creator version---namely $h_i^\ast\le h_i^{\text{sp}}$ for all $i$---is incorrect: Consider a 2-creator example with $\beta_1=1$ and $\beta_2=2$ (hence creator $2$ is highly imitative), production cost coefficient $c=1$, and $(\sigma_\eta^2+\gamma)^{-1}=\frac 14$. 
We first solve the social planner solution $\bm h^{\text{sp}}$: According to \Cref{eq:FOC lambda,eq:effort using Lambda}, $\bm h^{\text{sp}}$ (which is unique according to \Cref{obs:noInnov}) satisfies
\begin{equation*}
\Lambda(\bm h^{\text{sp}})=\frac{h_1^{\text{sp}}+2h_2^{\text{sp}}}{\frac 14+h_1^{\text{sp}}+4h_2^{\text{sp}}},\quad h_1^{\text{sp}}=(1-\Lambda(\bm h^{\text{sp}}))^2,\quad h_2^{\text{sp}}=(1-2\Lambda(\bm h^{\text{sp}}))^2.
\end{equation*}
Solving it gives $h_1^{\text{sp}}=\frac 14$, $h_2^{\text{sp}}=0$, and $\Lambda(\bm h^{\text{sp}})=\frac 12$.
On the other hand, $\bm h^\ast$ (also unique) satisfies
\begin{equation*}
\Lambda(\bm h^\ast)=\frac{h_1^\ast+2h_2^\ast}{\frac 14+h_1^\ast+4h_2^\ast},\quad h_1^\ast=\frac 12(1-\Lambda(\bm h^\ast))^2,\quad h_2^\ast=\frac 12(1-2\Lambda(\bm h^\ast))^2.
\end{equation*}
Prove by contradiction: Suppose that $h_i^\ast\le h_i^{\text{sp}}$ for all $i$, which implies $h_2^\ast=0$ (since efforts are non-negative), we must have $\Lambda(\bm h^\ast)=\frac 12$, which equals $\Lambda(\bm h^{\text{sp}})$. Since $\Lambda(\bm h)$ is monotone in $h_1$ once fixing $h_2$, $h_1^\ast=h_1^{\text{sp}}=\frac 14$. But $\frac 14\ne \frac 12(1-\frac 12)^2$. This contradicts $h_2^\ast=0$ and hence implies $h_2^\ast>0=h_2^{\text{sp}}$.

We further remark that, even if $h_2^{\text{sp}}>0$, it is still possible that $h_2^\ast>h_2^{\text{sp}}$: Consider the same example but now $\beta_2=1.7$, that is, creator $2$ remains highly imitative but to a less extent. Numerically solving \Cref{eq:FOC lambda,eq:effort using Lambda}, we have $\bm h^{\text{sp}}\approx (0.245,0.020)$ and $\bm h^\ast\approx (0.162,0.036)$. In either case, the intuition is the same: Creator 2 should invest no or little effort in the social optimum; but in the equilibrium, due to creator 1's under-investment, their effort becomes (slightly) valuable.
\end{remark}

\subsection{Originality Penalty (Proof of \Cref{obs:biasOrigin})}
\begin{proof}[Proof of \Cref{obs:biasOrigin}]
To derive the monotonicity of $R_i$ in $\beta_i$, we first prove that $\Lambda(\bm h^\ast)<\Lambda(\bm h^{\text{sp}})$ (where $\Lambda(\bm h)$ is defined in \Cref{eq:FOC lambda}). Plugging \Cref{eq:effort using Lambda} into the definition of $\Lambda(\bm h^\ast)$,
\begin{align*}
&\qquad (\sigma_\eta^2+\gamma)^{-1} \Lambda(\bm h^\ast)+\sum_{j=1}^N h_j^\ast \beta_j^2\Lambda(\bm h^\ast)=\sum_{j=1}^N h_j^\ast \beta_j ,\\
&\Longrightarrow 2c (\sigma_\eta^2+\gamma)^{-1} \Lambda(\bm h^\ast)=\sum_{j=1}^N \big (1-\beta_j \Lambda(\bm h^\ast)\big )^2 \beta_j\left (1-\beta_j \Lambda(\bm h^\ast)\right )=\sum_{j=1}^N \beta_j \big (1-\beta_j \Lambda(\bm h^\ast)\big )^3,
\end{align*}
and similarly plugging \Cref{eq:effort using Lambda} into $\Lambda(\bm h^{\text{sp}})$ gives
\begin{equation*}
c(\sigma_\eta^2+\gamma)^{-1} \Lambda(\bm h^{\text{sp}})=\sum_{j=1}^N \beta_j \big (1-\beta_j \Lambda(\bm h^{\text{sp}})\big )^3.
\end{equation*}
Viewing both equations as a function of $\Lambda$, the LHS's are both linear in $\Lambda$. Meanwhile, the RHS is monotonically decreasing in $\Lambda>0$ (recall that $\bm \beta$ is non-negative and at least one $\beta_i>0$) since:
\begin{equation*}
\frac{\mathrm d}{\mathrm d \Lambda} \sum_{j=1}^N \beta_j \big (1-\beta_j \Lambda\big )^3=-3\sum_{j=1}^N \beta_j^2 \big (1-\beta_j \Lambda\big )^2<0.
\end{equation*}
Therefore, we obtain $\Lambda(\bm h^\ast)<\Lambda(\bm h^{\text{sp}})$. Plug \Cref{eq:effort using Lambda} into the definition of $R_i=h_i^\ast/h_i^{\text{sp}}$:
\begin{equation*}
R_i=\frac{h_i^\ast}{h_i^{\text{sp}}}=\frac{\frac{1}{2c}\big (1-\beta_i \Lambda(\bm h^\ast)\big )^2}{\frac{1}{c}\big (1-\beta_i \Lambda(\bm h^{\text{sp}})\big )^2}=\frac 12 \left (\frac{1-\beta_i \Lambda(\bm h^\ast)}{1-\beta_i \Lambda(\bm h^{\text{sp}})}\right )^2.
\end{equation*}

We now view this ratio $R_i$ as a function of $\beta_i$ (in this fixed market, i.e., we fix $\Lambda(\bm h^\ast)$ and $\Lambda(\bm h^{\text{sp}})$, and consider the $\beta_i$'s corresponding to different creators). Formally, let $g(\beta):=\frac{1-\beta \Lambda(\bm h^\ast)}{1-\beta \Lambda(\bm h^{\text{sp}})}$ and $R(\beta):=\frac 12g(\beta)^2$, so $R_i=R(\beta_i)$ for all $i=1,2,\ldots,N$. Since we know $\Lambda(\bm h^\ast)<\Lambda(\bm h^{\text{sp}})$,
\begin{equation}\label{eq:non-redundant}
0< 1-\beta \Lambda(\bm h^{\text{sp}})\le 1-\beta \Lambda(\bm h^\ast),\quad \forall 0\le \beta < 1/\Lambda(\bm h^{\text{sp}}),
\end{equation}
hence we know $g(\beta)>0$ for all $\beta<1/\Lambda(\bm h^{\text{sp}})$. Moreover, we also have
\begin{equation*}
g'(\beta)=\frac{\mathrm d}{\mathrm d \beta} \frac{1-\beta \Lambda(\bm h^\ast)}{1-\beta \Lambda(\bm h^{\text{sp}})}=\frac{\Lambda(\bm h^{\text{sp}})-\Lambda(\bm h^\ast)}{ (1-\beta \Lambda(\bm h^{\text{sp}}))^2}>0,\quad \forall \beta,
\end{equation*}
where the last inequality again uses $\Lambda(\bm h^\ast)<\Lambda(\bm h^{\text{sp}})$. Henceforth $R'(\beta)=g(\beta)g'(\beta)>0$ for all $\beta\in [0,1/\Lambda(\bm h^{\text{sp}}))$.
According to the condition that $\beta_i<1/\Lambda(\bm h^{\text{sp}})$ for all $i$, $R_i$ is increasing in $\beta_i$.

It only remains to lower and upper bound $R_i$. Since $\beta_i\ge 0$ and $R_i$ is monotone in $\beta_i$, we have
\begin{equation*}
R_i\ge \frac 12 \left (\frac{1-0 \Lambda(\bm h^\ast)}{1-0 \Lambda(\bm h^{\text{sp}})}\right )^2=\frac 12,\quad \text{with equality attained when }\beta_i=0.
\end{equation*}
Furthermore, in the limiting case where $N\to \infty$ and all creators have the same $\beta_i\equiv \beta>0$, each creator must have a symmetric $h^\ast$ or $h^{\text{sp}}$. In this case, the $\Lambda(\bm h)$ defined in \Cref{eq:FOC lambda} becomes
\begin{equation*}
\Lambda(h \bm 1)=\frac{N h \beta}{(\sigma_\eta^2+\gamma)^{-1}+Nh \beta^2},\quad \forall h\ge 0.
\end{equation*}
Therefore \Cref{eq:effort using Lambda} implies $\bm h^\ast=h^\ast\bm 1$ and $\bm h^{\text{sp}}=h^{\text{sp}}\bm 1$ for scalars $h^\ast\ge 0$ and $h^{\text{sp}}\ge 0$, such that
\begin{equation*}
h^\ast=\frac{1}{2c} \left (1-\frac{N h^\ast \beta^2}{(\sigma_\eta^2+\gamma)^{-1}+Nh^\ast \beta^2}\right )^2,\quad h^{\text{sp}}=\frac{1}{c} \left (1-\frac{N h^{\text{sp}} \beta^2}{(\sigma_\eta^2+\gamma)^{-1}+Nh^{\text{sp}} \beta^2}\right )^2.
\end{equation*}
When $N\to \infty$, the $(\sigma_\eta^2+\gamma)^{-1}$ terms are dominated. Therefore, asymptotically we have
\begin{equation*}
h^\ast\sim\sqrt[3]{\frac{1}{2cN^2 \beta^4(\sigma_\eta^2 + \gamma)^2}},\quad 
h^{\text{sp}}\sim\sqrt[3]{\frac{1}{cN^2 \beta^4(\sigma_\eta^2 + \gamma)^2}}.
\end{equation*}
This gives $\lim_{N\to \infty} R_i=\lim_{N\to \infty} \frac{h^\ast}{h^{\text{sp}}}=\frac{1}{\sqrt[3]{2}}$ for all $i=1,2,\ldots,N$, as claimed.
\end{proof}

\begin{remark}[No Near-Redundant Creator]\label{remark:non-redundant}
{We now discuss the condition that $\beta_i<1/\Lambda(\bm h^{\text{sp}})$ for all $i$ (see \Cref{footnote:non-redundant} of \Cref{obs:biasOrigin}). We first discuss its two implications, which is the reason why we call it ``a market without near-redundant creator'' in \Cref{footnote:non-redundant}:
\begin{enumerate}
\item \textbf{BLUE $\hat X$ places all positive weights.} Recall from the proof of \Cref{lem:precision common factor} that $\bm \Sigma^{-1}=\bm D-\frac{(\sigma_\eta^2+\gamma)\bm D\bm\beta\bm\beta'\bm D}{1+(\sigma_\eta^2+\gamma)\bm\beta'\bm D\bm\beta}$. Hence the weight $\hat X=(\bm 1'\bm\Sigma^{-1}\bm 1)^{-1}\bm 1'\bm\Sigma^{-1}\bm s$ assigns to creator $i$ is
\begin{equation*}
w_i\propto [\bm\Sigma^{-1}\bm 1]_i=h_i-\frac{(\sigma_\eta^2+\gamma)h_i\beta_i\sum_{j=1}^N h_j\beta_j}{1+(\sigma_\eta^2+\gamma)\sum_{j=1}^N h_j\beta_j^2}=h_i\big (1-\beta_i \Lambda(\bm h)\big ),\quad \forall i=1,2,\ldots,N,
\end{equation*}
which is positive when $\beta_i<1/\Lambda(\bm h)$. Imposed at $\bm h^{\text{sp}}$, the condition thus makes the optimal estimator weight every creator positively. Should it be violated---thus the BLUE puts a negative weight---this means their highly correlated signal is only used to cancel out the common factor $\eta$, hence the name ``near-redundant.''

\item \textbf{Socially optimum is monotone in correlation.} By \Cref{eq:effort using Lambda}, $h_i^{\text{sp}}=\frac 1c(1-\beta_i \Lambda(\bm h^{\text{sp}}))^2$. On the range $\beta_i\in [0,1/\Lambda(\bm h^{\text{sp}}))$, $1-\beta_i \Lambda(\bm h^{\text{sp}})$ is positive and decreasing in $\beta_i$, which means $h_i^{\text{sp}}$ is decreasing in $\beta_i$: the planner asks original creators to invest more. Past the threshold this reverses---$h_i^{\text{sp}}$ turns back up---and the reason mirrors the previous item: the near-redundant creator becomes valuable again, but purely as a hedge against the common $\eta$.
\end{enumerate}

Both readings carry over to the market equilibrium $\bm h^\ast$: Since $\Lambda(\bm h^\ast)<\Lambda(\bm h^{\text{sp}})$ (established in the proof of \Cref{obs:biasOrigin}), $\beta_i<1/\Lambda(\bm h^{\text{sp}})$ implies $\beta_i<1/\Lambda(\bm h^\ast)$, so $\bm h^\ast$ likewise features positive weights and $\beta$-monotone efforts.
We then state the technical impact of this condition:
\begin{enumerate}
\item \textbf{Role in the proof of \Cref{obs:biasOrigin}.} This condition ensures \Cref{eq:non-redundant}, which says $g(\beta)=\frac{1-\beta \Lambda(\bm h^\ast)}{1-\beta \Lambda(\bm h^{\text{sp}})}$ is positive at all $\beta_1,\beta_2,\ldots,\beta_N$. Combined with the subsequent formula that $g'(\beta)>0$ for all $\beta$, this implies the monotonicity of $R(\beta)=\frac 12 g(\beta)^2$. Note that, one may think the facts that $g(0)=\frac 11=1$ and $g'(\beta)>0$ suffice to ensure the monotonicity of $g(\beta)$ and consequently $R(\beta)$. This is untrue, because at $\beta=1/\Lambda(\bm h^{\text{sp}})$, the function $g(\beta)$ flips from $+\infty$ to $-\infty$ (similar to $\tan x$: non-monotone but has an always positive derivative $\sec^2 x$).
\item \textbf{\Cref{obs:biasOrigin} Fails without it.} Consider the same example in \Cref{remark:individual creator over-produce}, but now $\beta_2=2.2$ (i.e., even more redundant). Now $\bm h^{\text{sp}}\approx (0.252,0.009)$, which implies $\beta_2 \Lambda(\bm h^{\text{sp}})=1.096>1$. In this case, we have $\bm h^\ast\approx (0.173,0.004)$, hence $R_1=0.687$ and $R_2=0.490$---both the monotonicity (implying $R_2>R_1$) and the lower bound (implying $R_2>0.5$) fail.
\item \textbf{Per-creator \Cref{obs:noInnov} fails even with it.} One may naturally ask: Given this condition, would the per-creator under-investment variant of \Cref{obs:noInnov}---discussed in \Cref{footnote:individual creator over-produce} and \Cref{remark:individual creator over-produce}---now be true? The answer remains negative. In the second example of \Cref{remark:individual creator over-produce} (with $\beta_2=1.7$), we have $\bm h^{\text{sp}}\approx (0.245,0.020)$ and hence $\Lambda(\bm h^{\text{sp}})\approx 0.505$. In this case $\beta_2 \Lambda(\bm h^{\text{sp}})\approx 0.858<1$, but we still have $h_2^\ast>h_2^{\text{sp}}$. Should we really want a per-creator under-investment result, we would need $\beta_i\le \frac{\sqrt 2-1}{\sqrt 2\Lambda(\bm h^{\text{sp}})-\Lambda(\bm h^\ast)}$ for all $i$ (a sufficient condition is $\beta_i\le (1-\frac{1}{\sqrt 2})/\Lambda(\bm h^{\text{sp}})\approx 0.293/\Lambda(\bm h^{\text{sp}})$, which is strictly stronger than $\beta_i<1/\Lambda(\bm h^{\text{sp}})$).
\end{enumerate}
}
\end{remark}

\subsection{Minimax Robust Optimization View of \Cref{eq:Bayesian mu}}
\begin{lemma}[Minimax Robust Optimization View]\label{lem:minimax}
The effective precision $K(\bm h)$ derived under the Bayesian setting in \Cref{eq:Bayesian mu} (where $\mu_\eta \sim \mathcal N(0, \gamma)$ and $\eta\mid \mu_\eta\sim \mathcal N(\mu_\eta,\sigma_\eta^2)$) is identical to that derived from a {Minimax Robust Optimization} framework where the buyer only knows $\mu_\eta^2\le \gamma$.
Formally, let the buyer seek a linear estimator $\hat X = \bm w' \bm s$ to minimize the worst-case MSE:
\begin{equation}\label{eq:minimax view}
\min_{\bm w} \max_{\mu_\eta^2\le \gamma}\E[(\bm w'\bm s-X)^2],\quad \text{s.t. }\bm w'\bm 1=1.
\end{equation}
The solution $\bm w^\ast$ is identical to BLUE weights in the Bayesian setting, and the MSE equals $K(\bm h)^{-1}$.
\end{lemma}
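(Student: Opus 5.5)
The plan is to compute the worst-case MSE of an arbitrary linear estimator in closed form and observe that it is the quadratic form $\bm w'\bm\Sigma\bm w$. Here $\bm\Sigma=\bm D^{-1}+(\sigma_\eta^2+\gamma)\bm\beta\bm\beta'$ is exactly the Bayesian noise covariance from the proof of \Cref{lem:precision common factor}. Once this identity holds, both problems become the same constrained quadratic minimization, and the Aitken/BLUE solution from \Cref{lem:precision common factor} applies verbatim.

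\textbf{Step 1: worst-case MSE of a fixed $\bm w$.} In the robust model, $\mu_\eta$ is a deterministic but unknown constant with $\mu_\eta^2\le\gamma$. I will write $\eta=\mu_\eta+\nu$ with $\nu\sim\mathcal N(0,\sigma_\eta^2)$, where $\nu$ is independent of the $\varepsilon_i\sim\mathcal N(0,h_i^{-1})$. Using the constraint $\bm w'\bm 1=1$, the estimation error is
\begin{equation*}
\bm w'\bm s-X=\bm w'\bm\xi=(\bm w'\bm\beta)\mu_\eta+(\bm w'\bm\beta)\nu+\bm w'\bm\varepsilon .
\end{equation*}
Squaring and taking expectations gives bias squared plus variance:
\begin{equation*}
\E[(\bm w'\bm s-X)^2]=\bm w'\bm D^{-1}\bm w+\sigma_\eta^2(\bm w'\bm\beta)^2+\mu_\eta^2(\bm w'\bm\beta)^2 .
\end{equation*}
This is nondecreasing in $\mu_\eta^2$ for every $\bm w$, because its coefficient $(\bm w'\bm\beta)^2$ is nonnegative. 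The inner maximum is therefore attained at $\mu_\eta^2=\gamma$, and it equals
\begin{equation*}
\bm w'\bm D^{-1}\bm w+(\sigma_\eta^2+\gamma)(\bm w'\bm\beta)^2=\bm w'\bm\Sigma\bm w .
\end{equation*}

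\textbf{Step 2: the outer minimization.} The problem in \Cref{eq:minimax view} reduces to $\min_{\bm w}\bm w'\bm\Sigma\bm w$ subject to $\bm w'\bm 1=1$. The matrix $\bm\Sigma$ is positive definite when all $h_i>0$. A Lagrangian argument, or equivalently Cauchy--Schwarz in the $\bm\Sigma$-inner product, gives the unique minimizer
\begin{equation*}
\bm w^\ast=\frac{\bm\Sigma^{-1}\bm 1}{\bm 1'\bm\Sigma^{-1}\bm 1},
\end{equation*}
with optimal value $(\bm 1'\bm\Sigma^{-1}\bm 1)^{-1}$. These weights are exactly those of the BLUE in \Cref{eq:BLUE Bayesian}. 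By the Sherman--Morrison--Woodbury computation in \Cref{eq:precision general}, $\bm 1'\bm\Sigma^{-1}\bm 1=K(\bm h)$, so the minimax MSE equals $K(\bm h)^{-1}$. Cases with some $h_i=0$ follow by the same continuity convention used in \Cref{lem:precision common factor}: drop those coordinates, or take limits.

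\textbf{Main obstacle.} Step 1 requires treating $\mu_\eta$ as a deterministic adversarial parameter rather than a random variable. This turns the Bayesian prior variance $\gamma$ into a squared-bias term $\mu_\eta^2(\bm w'\bm\beta)^2$. The step goes through only because that term enters linearly in $\mu_\eta^2$ with a nonnegative coefficient, so the maximizer $\mu_\eta^2=\gamma$ does not depend on $\bm w$ and the min and max decouple. I will state this explicitly, since it is the point where the "bounded bias" and "Gaussian prior" interpretations coincide.
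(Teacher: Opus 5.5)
Your proposal is correct and matches the paper's proof essentially step for step. Like the paper, you decompose the MSE into variance plus squared bias, maximize the bias term at $\mu_\eta^2=\gamma$ to get $\bm w'\bm\Sigma\bm w$ with $\bm\Sigma=\bm D^{-1}+(\sigma_\eta^2+\gamma)\bm\beta\bm\beta'$, and then solve the Lagrangian to obtain $\bm w^\ast=\bm\Sigma^{-1}\bm 1/(\bm 1'\bm\Sigma^{-1}\bm 1)$ with value $(\bm 1'\bm\Sigma^{-1}\bm 1)^{-1}=K(\bm h)^{-1}$. Your remarks on positive definiteness (which gives uniqueness) and on the zero-effort convention are small refinements that the paper leaves implicit.
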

\begin{proof}
We decompose the MSE of any linear estimator $\hat X = \bm w' \bm s$ into variance and bias:
\begin{equation*}
  \text{MSE}(\hat X) = \underbrace{\Var(\hat X)}_{\text{Variance}} + \underbrace{(\E[\hat X] - X)^2}_{\text{Squared Bias}}.
\end{equation*}
Under the robust optimization view of \Cref{eq:common factor}, $\bm s = \bm 1 X + \bm \beta \eta + \bm \varepsilon$, where $\bm\varepsilon=(\varepsilon_1,\ldots,\varepsilon_N)'$, $\E[\eta] = \mu_\eta$, and $\mu_\eta^2\le \gamma$.
The variance depends only on the fluctuations around the mean, not on $\mu_\eta$ itself. Therefore, letting $\bm D=\operatorname{diag}(h_1,\ldots,h_N)$ and $\bm \Sigma_{\text{noise}} = \bm D^{-1} + \sigma_\eta^2 \bm \beta \bm \beta'$, we know $\Var(\hat X) = \bm w' \bm \Sigma_{\text{noise}} \bm w$ is independent of $\mu_\eta$.
On the other hand, since $\bm w' \bm 1 = 1$, the mean of $\hat X$ is $\E[\hat X] = \bm w' (\bm 1 X + \bm \beta \mu_\eta) = X + \mu_\eta (\bm w' \bm \beta)$. The squared bias term therefore equals $\mu_\eta^2 (\bm w' \bm \beta)^2$.

This reveals that the inner maximization w.r.t. $\mu_\eta$ in \Cref{eq:minimax view} is equivalent to $\max_{\mu_\eta^2 \le \gamma} \mu_\eta^2 (\bm w' \bm \beta)^2$. The maximum is clearly attained at the boundary $\mu_\eta^2 = \gamma$.
Thus, the buyer's problem in \Cref{eq:minimax view} simplifies to:
\begin{equation*}
    \min_{\bm w} \left( \bm w' \bm \Sigma_{\text{noise}} \bm w + \gamma (\bm w' \bm \beta)^2 \right) \quad \text{s.t. } \bm w' \bm 1 = 1.
\end{equation*}
Notice that $\gamma (\bm w' \bm \beta)^2 = \bm w' (\gamma \bm \beta \bm \beta') \bm w$. We can combine the quadratic terms:
\begin{equation*}
    \min_{\bm w} \bm w' \left( \bm \Sigma_{\text{noise}} + \gamma \bm \beta \bm \beta' \right) \bm w \quad \text{s.t. } \bm w' \bm 1 = 1.
\end{equation*}
  Let $\bm \Sigma = \bm \Sigma_{\text{noise}} + \gamma \bm \beta \bm \beta' = \bm D^{-1} + (\sigma_\eta^2 + \gamma) \bm \beta \bm \beta'$.
The Lagrangian for this problem is:
\begin{equation*}
    \mathcal{L}(\bm w, \nu) = \frac{1}{2} \bm w' \bm \Sigma \bm w - \nu (\bm w' \bm 1 - 1).
\end{equation*}
  The First-Order Condition (FOC) with respect to $\bm w$ reveals $\bm \Sigma \bm w^\ast = \nu \bm 1$, or equivalently, $\bm w^\ast = \nu \bm \Sigma^{-1} \bm 1$.
  Since $\bm 1' \bm w^\ast = 1$, we know $\nu = (\bm 1' \bm \Sigma^{-1} \bm 1)^{-1}$ and therefore:
\begin{equation*}
    \bm w^\ast = (\bm 1' \bm \Sigma^{-1} \bm 1)^{-1} \bm \Sigma^{-1} \bm 1.
\end{equation*}
  Since this is the covariance matrix $\bm \Sigma$ in the Bayesian setting (see the proof of \Cref{lem:precision common factor}), this is exactly the formula for BLUE weights in the Bayesian setting.
  Furthermore, the worst-case MSE induced by $\bm w^\ast$ is
\begin{equation*}
    \text{MSE}^\ast = \bm w^{\ast\prime} \bm \Sigma \bm w^\ast = \left( \nu \bm 1' \bm \Sigma^{-1} \right) \bm \Sigma \left( \nu \bm \Sigma^{-1} \bm 1 \right) = (\bm 1' \bm \Sigma^{-1} \bm 1)^{-1},
\end{equation*}
  where we used $\nu = (\bm 1' \bm \Sigma^{-1} \bm 1)^{-1}$.
In the Bayesian setting, we have $K(\bm h)=\bm 1'\bm \Sigma^{-1}\bm 1$; therefore, $\text{MSE}^\ast=K(\bm h)^{-1}$, as claimed.
\end{proof}

\section{Proofs for \Cref{sec:dynamic}}\label{app:dynamics}

In this section, we first prove that the assumption on supply elasticity (\Cref{eq:elasticity dynamic}) is equivalent to the monotonicity of rank-and-file creator-innovator ratio $\rho(p)$. We then prove \Cref{obs:price trap,obs:collapse}, which together constitute for our ``curse of precision'' observation.

\subsection{Supply Elasticity and Monotonicity of Ratio}
\begin{lemma}[Supply Elasticity and Monotonicity of Ratio]\label{lem:rho dynamic}
The condition in \Cref{eq:elasticity dynamic}, namely $\mathcal E_C(p)>\mathcal E_O(p)$ for all $p>0$ where $\mathcal E_j(p):=\frac{\mathrm d\ln N_j(p)}{\mathrm d\ln p}$ ($\forall j\in \{O,C\}$), is equivalent to
\begin{equation*}
\rho'(p)>0,~\forall p>0,\quad \text{where }\rho(p):=\frac{N_C(p)}{N_O(p)}.
\end{equation*}
\end{lemma}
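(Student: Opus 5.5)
The plan is to take logarithms and differentiate, using that both supplies are positive and differentiable on $p>0$. Since $N_O(p),N_C(p)>0$ for $p>0$ (implicit in the definition of $\mathcal E_j$ via $\ln N_j$), the ratio $\rho(p)=N_C(p)/N_O(p)$ is positive and differentiable. We can therefore write $\ln\rho(p)=\ln N_C(p)-\ln N_O(p)$ and differentiate with respect to $p$.

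The key identity I would establish is
\begin{equation*}
\frac{\rho'(p)}{\rho(p)}=\frac{\mathrm d\ln\rho(p)}{\mathrm dp}=\frac{N_C'(p)}{N_C(p)}-\frac{N_O'(p)}{N_O(p)}=\frac{1}{p}\big(\mathcal E_C(p)-\mathcal E_O(p)\big),
\end{equation*}
where the last step uses $\mathcal E_j(p)=\frac{\mathrm d\ln N_j}{\mathrm d\ln p}=p\,\frac{N_j'(p)}{N_j(p)}$ by the chain rule, since $\mathrm d\ln p=\mathrm dp/p$. Rearranging gives $\rho'(p)=\frac{\rho(p)}{p}\big(\mathcal E_C(p)-\mathcal E_O(p)\big)$.

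Because the prefactor $\rho(p)/p$ is strictly positive for every $p>0$, the sign of $\rho'(p)$ coincides pointwise with the sign of $\mathcal E_C(p)-\mathcal E_O(p)$. Hence $\mathcal E_C(p)>\mathcal E_O(p)$ for all $p>0$ holds if and only if $\rho'(p)>0$ for all $p>0$, which gives both directions at once.

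There is no real obstacle here. The only point that needs care is to state explicitly the positivity and differentiability of $N_j$ on $(0,\infty)$, which is needed for the logarithms and for $\rho$ to be well defined. I would justify it from the model assumptions: $N_j'>0$ exists, and elasticities are only defined where $N_j>0$.
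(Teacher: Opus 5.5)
Your proof is correct and is essentially the paper's argument. The paper applies the quotient rule to $\rho'(p)$ and divides by $N_C(p)N_O(p)>0$ to get $N_C'/N_C>N_O'/N_O$, then multiplies by $p$; your logarithmic-differentiation identity $\rho'(p)=\frac{\rho(p)}{p}\big(\mathcal E_C(p)-\mathcal E_O(p)\big)$ is the same computation, and your explicit note that $N_j>0$ is required makes visible an assumption the paper uses only implicitly.
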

\begin{proof}
The condition that $\rho'(p)>0$ is equivalent to
\begin{equation*}
0<\frac{\mathrm d}{\mathrm d p}\frac{N_C(p)}{N_O(p)}=\frac{N_C'(p) N_O(p)-N_C(p) N_O'(p)}{\big (N_O(p)\big )^2},
\end{equation*}
which is equivalent to $\frac{N_C'(p)}{N_C(p)}>\frac{N_O'(p)}{N_O(p)}$. Multiplying $p$ onto both sides, we get $\mathcal E_C(p)>\mathcal E_O(p)$.
\end{proof}

\subsection{Upper Bound on Market Price (Proof of \Cref{obs:price trap})}
\begin{proof}[Proof of \Cref{obs:price trap}]
According to the buyer's optimization \Cref{eq:profit dynamic}, at any $t$, the market price $p(t)$ maximizes the instantaneous profit $\Pi_{\text{inst}}(t)$ (we omit all dependencies on $t$ for readability):
Let $a:=\frac{\partial K}{\partial S_O}$ and $b:=\frac{\partial K}{\partial S_C}$. By \Cref{lem:precision vs data dynamic}, $0<a\le 1$ and $0<b\le \kappa_C$.
\begin{align*}
&\quad \Pi_{\text{inst}}={\frac{\mathrm dK}{\mathrm d t}}-{\Big (N_O(p)+N_C(p)\Big ) p h(p)}\\
&\overset{(a)}{=}\frac{\partial K}{\partial S_O} \frac{\mathrm d S_O}{\mathrm d t}+\frac{\partial K}{\partial S_C} \frac{\mathrm d S_C}{\mathrm d t}-\frac{p^2}{c} \Big (N_O(p)+N_C(p)\Big )\\
&\overset{(b)}{=}a \left( N_O \frac{p}{c} - \delta S_O \right) + b \left( N_C \frac{p}{c} - \delta S_C \right) - \frac{p^2}{c} (N_O + N_C)\\
&=\frac pc \Big (N_O (a-p)+N_C(b-p)\Big )-\delta \big (aS_O+bS_C\big ),
\end{align*}
where (a) uses the Law of Total Derivatives, and (b) uses the best effort formula $h(p)=p/c$ and the dynamics of $S_C(t)$ and $S_O(t)$ in \Cref{eq:S dynamic}.

Therefore, the instantaneous profit $\Pi_{\text{inst}}(t)$ decomposes into a variable component (depending on $p(t)$) and a sunk cost component (the $-\delta(aS_O+bS_C)$ term, which arises due to depreciation).
For the buyer to sustain a price $p(t)$, the variable component must be non-negative (otherwise the buyer would be better off by setting $p(t)=0$ and bearing only the sunk cost). Thus, using $a\le 1$ and $b\le \kappa_C$, the market price $p(t)$ must satisfy (again omitting all $t$'s for readability):
\begin{equation*}
N_O(p) (1 - p) + N_C(p) (\kappa_C - p) \ge N_O(p) (a - p) + N_C(p) (b - p)\ge 0.
\end{equation*}
Rearranging for $p$ gives $p\le \kappa_C+\frac{1-\kappa_C}{1+\rho(p)}$, where we recall $\rho(p)=\frac{N_C(p)}{N_O(p)}$ is the rank-and-file creator-innovator ratio. From \Cref{eq:elasticity dynamic}, $\rho(p)$ is strictly increasing in $p$, and thus the RHS is decreasing in $p$. Since the LHS is increasing, the fixed-point equation has a unique solution, namely $\bar p(\kappa_C)$, such that
\begin{equation}\label{eq:p bar dynamic}
\bar p(\kappa_C)=\kappa_C+\frac{1-\kappa_C}{1+\rho\big (\bar p(\kappa_C)\big )},\quad \forall \kappa_C\in (0,1].
\end{equation}
Again invoking the Implicit Function Theorem to \Cref{eq:p bar dynamic} (whose detailed calculation is deferred to \Cref{lem:p bar monotone}), we are able to prove $\bar p'(\kappa_C)>0$, i.e., $\bar p$ is increasing in $\kappa_C$. This further establishes the continuity of $\bar p$, which gives
\begin{equation*}
\lim_{\kappa_C\to 0^+} \bar p(\kappa_C)=\bar p(0),\quad \text{where }\bar p(0)=0+\frac{1-0}{1+\rho(\bar p(0))}=\left (1+\frac{N_C(\bar p(0))}{N_O(\bar p(0))}\right )^{-1}.
\end{equation*}
This matches the definition of $p^\ast$ in \Cref{eq:price trap}. Hence, $\lim_{\kappa_C\to 0^+} \bar p(\kappa_C)=p^\ast$, as claimed.
\end{proof}

\begin{lemma}[{Marginal Effective Precision}]\label{lem:precision vs data dynamic}
The effective precision $K(t)$ defined in \Cref{eq:precision dynamic} satisfies
\begin{equation*}
0<\frac{\partial K}{\partial S_O}(t)\le 1,\qquad 0<\frac{\partial K}{\partial S_C}(t)\le \kappa_C(t),\quad \forall t\ge 0,
\end{equation*}
where $\kappa_C(t)$ is defined in \Cref{eq:kappa_C dynamic}, which we recall as
\begin{equation*}
\kappa_C(t):=\left (1+\big (\sigma_\eta^2+\lambda K(t)\big ) S_C(t) \beta_C^2\right )^{-2}.
\end{equation*}
\end{lemma}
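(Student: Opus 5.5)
The plan is to work directly with the fixed-point characterization of $K(t)$ in \Cref{eq:precision dynamic}, using its simplified form with $\beta_O\approx 0$, and to obtain both partial derivatives by implicit differentiation. Throughout I suppress $t$.

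\textbf{Step 1: the fixed point is well defined and differentiable.} Define
\begin{equation*}
f(K,S_O,S_C):=S_O+\frac{S_C}{1+(\sigma_\eta^2+\lambda K)S_C\beta_C^2},\qquad G(K,S_O,S_C):=f(K,S_O,S_C)-K,
\end{equation*}
so that $K$ solves $G=0$. For $K\ge 0$, the map $K\mapsto f$ is continuous and weakly decreasing, with $f(0)\ge 0$, while $K\mapsto K$ is strictly increasing. Hence there is a unique nonnegative root $K$. Moreover,
\begin{equation*}
\partial_K G=-1+\partial_K f=-1-\frac{\lambda S_C^2\beta_C^2}{(1+qS_C\beta_C^2)^2}\le -1,\qquad q:=\sigma_\eta^2+\lambda K.
\end{equation*}
Since $\partial_K G\neq 0$, the Implicit Function Theorem applies, and $K$ is a $C^1$ function of $(S_O,S_C)$.

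\textbf{Step 2: compute the partial derivatives.} The Implicit Function Theorem gives
\begin{equation*}
\frac{\partial K}{\partial S_j}=\frac{\partial_{S_j} f}{1+\lambda S_C^2\beta_C^2\,(1+qS_C\beta_C^2)^{-2}},\qquad j\in\{O,C\}.
\end{equation*}
Call the denominator $D$; note that $D\ge 1$. For the first derivative, $\partial_{S_O} f=1$, so
\begin{equation*}
\frac{\partial K}{\partial S_O}=\frac{1}{D}\in(0,1].
\end{equation*}
For the second, the quotient rule gives
\begin{equation*}
\partial_{S_C} f=\frac{(1+qS_C\beta_C^2)-S_C\,q\beta_C^2}{(1+qS_C\beta_C^2)^2}=(1+qS_C\beta_C^2)^{-2}=\kappa_C,
\end{equation*}
with $K$ held fixed inside $q$. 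Therefore
\begin{equation*}
\frac{\partial K}{\partial S_C}=\frac{\kappa_C}{D}\in(0,\kappa_C].
\end{equation*}
Finally, $\kappa_C\le 1$ holds trivially, since $q,S_C\ge 0$.

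\textbf{Main obstacle.} The delicate point is the status of the approximation $\beta_O\approx 0$. If one insists on the exact expression in \Cref{eq:precision dynamic} with small $\beta_O>0$, then $\partial_{S_O} f$ and $\partial_{S_C} f$ pick up correction terms of order $\beta_O$. I would handle this by stating the lemma for the $\beta_O=0$ fixed point that the paper adopts as the definition of $K(t)$, and remarking that the strict inequalities survive small perturbations by continuity. A secondary point is the boundary case $S_C=0$. There $D=1$ and $\kappa_C=1$, so both bounds hold with equality, and positivity still holds.
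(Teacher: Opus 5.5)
Your proposal is correct and is essentially the paper's proof. Both implicitly differentiate the fixed-point equation \Cref{eq:precision dynamic}, with $\beta_O=0$ taken as the definition of $K(t)$, and note that the defining map's derivative in $K$ has magnitude at least $1$ because $K\mapsto S_C/(1+(\sigma_\eta^2+\lambda K)S_C\beta_C^2)$ is decreasing; this gives $\partial K/\partial S_O=1/D\in(0,1]$ and $\partial K/\partial S_C=\kappa_C/D\in(0,\kappa_C]$, while your existence-and-uniqueness argument for the nonnegative root and explicit formula for $D$ only fill in details the paper leaves implicit.
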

\begin{proof}
From \Cref{eq:precision dynamic}, the effective precision $K(t)$ is given by the implicit equation:
\begin{equation*}
\mathcal F(K, S_O, S_C) := K - S_O - \frac{S_C}{1 + \big (\sigma_\eta^2+\lambda K\big ) S_C \beta_C^2} = 0,
\end{equation*}
By the Implicit Function Theorem \citep{krantz2002implicit}, the partial derivatives of $K$ w.r.t. stocks $S_O$ and $S_C$ are therefore given by
\begin{equation*}
\frac{\partial K}{\partial S_j} = - \frac{\partial \mathcal F / \partial S_j}{\partial \mathcal F / \partial K},\quad j\in \{O,C\}.
\end{equation*}
Calculating the partial derivatives of $\mathcal F$ with respect to $S_O$, $S_C$, and $K$ gives:
\begin{equation*}\begin{aligned}
\frac{\partial \mathcal F}{\partial S_O} &= -1, \\
\frac{\partial \mathcal F}{\partial S_C} &= - \frac{1 + \big (\sigma_\eta^2 +\lambda K\big ) S_C \beta_C^2 - S_C\cdot \big (\sigma_\eta^2 +\lambda K\big ) \beta_C^2}{\big (1 + \big (\sigma_\eta^2 +\lambda K\big ) S_C \beta_C^2\big )^2} = - \frac{1}{\big (1 + \big (\sigma_\eta^2 +\lambda K\big ) S_C \beta_C^2\big )^2},\\
\frac{\partial \mathcal F}{\partial K} &= 1 - S_C \cdot \frac{\partial}{\partial K} \left( \frac{1}{1 + \big (\sigma_\eta^2 +\lambda K\big )S_C \beta_C^2} \right).
\end{aligned}\end{equation*}
As $(1 + (\sigma_\eta^2 +\lambda K )S_C \beta_C^2)^{-1}$ is decreasing in $K$, its derivative is non-positive. Hence $\frac{\partial \mathcal F}{\partial K} \ge 1$, and
\begin{equation*}\begin{aligned}
0<\frac{\partial K}{\partial S_O} = \frac{1}{\partial \mathcal F / \partial K} \le 1,\quad 
0<\frac{\partial K}{\partial S_C} = \frac{1}{\big (1 + \big (\sigma_\eta^2 +\lambda K\big ) S_C \beta_C^2\big )^2} \frac{1}{\partial \mathcal F / \partial K}\le \frac{1}{\big (1 + \big (\sigma_\eta^2 +\lambda K\big ) S_C \beta_C^2\big )^2}.
\end{aligned}\end{equation*}
Plugging in the definition of $\kappa_C(t)$ gives the conclusion.
\end{proof}

\begin{lemma}[Monotonicity of Price Upper Bound]\label{lem:p bar monotone}
The price upper bound $\bar p(\kappa_C)$ derived in \Cref{eq:p bar dynamic}, namely
\begin{equation*}
\bar p(\kappa_C)=\kappa_C+\frac{1-\kappa_C}{1+\rho\big (\bar p(\kappa_C)\big )},\quad \text{where }\rho(p)=\frac{N_C(p)}{N_O(p)},
\end{equation*}
is strictly increasing in $\kappa_C$, i.e., $\bar p'(\kappa_C)>0$ for all $\kappa_C\in (0,1]$.
\end{lemma}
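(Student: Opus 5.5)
The plan is to apply the Implicit Function Theorem to the defining equation \Cref{eq:p bar dynamic}. Write it as $G(p,\kappa):=p-\kappa-\frac{1-\kappa}{1+\rho(p)}=0$. The proof of \Cref{obs:price trap} already shows that for each $\kappa_C\in(0,1]$ the equation has a unique root $\bar p(\kappa_C)$. That argument compares an increasing left-hand side with a decreasing right-hand side, using $\rho'>0$ from \Cref{lem:rho dynamic}. So the remaining work is to check that $G$ is $C^1$ near the root, with $\partial G/\partial p\neq 0$, and then to sign the derivative. Continuous differentiability of $\rho$ follows from differentiability of $N_O$ and $N_C$ together with $N_O>0$, which is implicit in $\rho$ being well defined.

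The partial derivatives are
\begin{equation*}
\frac{\partial G}{\partial p}=1+\frac{(1-\kappa)\rho'(p)}{(1+\rho(p))^2},\qquad
\frac{\partial G}{\partial \kappa}=-1+\frac{1}{1+\rho(p)}=-\frac{\rho(p)}{1+\rho(p)}.
\end{equation*}
Since $\kappa\le 1$ and $\rho'>0$, the first is at least $1$, so the Implicit Function Theorem applies. The second is strictly negative because $\rho(p)=N_C/N_O>0$ for $p>0$. Hence
\begin{equation*}
\bar p'(\kappa_C)=-\frac{\partial G/\partial\kappa}{\partial G/\partial p}=\frac{\rho(\bar p)/(1+\rho(\bar p))}{1+(1-\kappa_C)\rho'(\bar p)/(1+\rho(\bar p))^2}>0 .
\end{equation*}

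The only delicate point is making sure $\bar p(\kappa_C)>0$, so that $\rho$ is evaluated at a positive price where $\rho>0$ and $\rho'>0$ hold. This follows because $G(0,\kappa)=-\kappa-\frac{1-\kappa}{1+\rho(0)}<0$ for $\kappa\in(0,1]$ (when $\kappa=1$ it equals $-1$). The unique root therefore lies strictly to the right of $0$, and $\bar p\le 1$.

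For robustness I would also give a derivative-free argument. Rewrite the equation as $\bar p=1-(1-\kappa)\frac{\rho(\bar p)}{1+\rho(\bar p)}$, a convex combination of $1$ and $\frac{1}{1+\rho(\bar p)}$. Suppose $\kappa_1<\kappa_2$ but $\bar p(\kappa_1)\ge\bar p(\kappa_2)$. Then $\rho(\bar p(\kappa_1))\ge\rho(\bar p(\kappa_2))$, which forces the right-hand side at $\kappa_1$ to be strictly smaller, since both the weight $(1-\kappa)$ and the fraction $\rho/(1+\rho)$ are larger or equal with $\rho>0$. That gives $\bar p(\kappa_1)<\bar p(\kappa_2)$, a contradiction. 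This confirms strict monotonicity even where differentiability is in doubt.
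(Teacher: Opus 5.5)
Your proposal is correct and follows the paper's route: the Implicit Function Theorem applied to $G(\bar p,\kappa_C)=\bar p-\kappa_C-\frac{1-\kappa_C}{1+\rho(\bar p)}$, with the same partial derivatives and the same sign argument ($\rho>0$ for the numerator; $\kappa_C\le 1$ and $\rho'>0$ for the denominator). Your extra check that $\bar p(\kappa_C)\in(0,1]$ and your derivative-free comparison argument are valid additions that the paper omits; the comparison argument gives strict monotonicity without needing $\rho$ to be smooth.
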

\begin{proof}
From \Cref{eq:p bar dynamic}, the price upper bound $\bar p(\kappa_C)$ is given by the implicit equation:
\begin{equation*}
G(\bar p,\kappa_C):=\bar p-\kappa_C-\frac{1-\kappa_C}{1+\rho(\bar p)}=0.
\end{equation*}
By the Implicit Function Theorem \citep{krantz2002implicit}, we have
\begin{equation*}
\frac{\mathrm d \bar p}{\mathrm d \kappa_C}=-\frac{\partial G/\partial \kappa_C}{\partial G/\partial \bar p}=\frac{1-\frac{1}{1+\rho(\bar p)}}{1+\frac{1-\kappa_C}{(1+\rho(\bar p))^2} \rho'(\bar p)}.
\end{equation*}
The numerator is positive because $\rho(p)>0$ for all $p$, and the denominator is also positive because $\kappa_C\in (0,1]$ and also $\rho'(p)>0$ for all $p$ (from \Cref{eq:elasticity dynamic} and \Cref{lem:rho dynamic}). Therefore, $\bar p'(\kappa_C)>0$ for all $\kappa_C\in (0,1]$, as claimed.
\end{proof}

\subsection{Barrier of Model Precision (Proof of \Cref{obs:collapse})}
\begin{proof}[Proof of \Cref{obs:collapse}]
Recall the fixed-point definition for precision $K(t)$ in \Cref{eq:precision dynamic}, namely $K(t)=S_O(t)+\frac{S_C(t)}{1+ (\sigma_\eta^2 + \lambda K(t) )S_C(t)\beta_C^2}$. When innovators have exited ($S_O(t) \to 0$), this equation becomes
\begin{equation*}
K(t) = \frac{S_C(t)}{1 + \big (\sigma_\eta^2 + \lambda K(t)\big ) S_C(t) \beta_C^2 }.
\end{equation*}
In the limit of infinite correlated data (i.e., $S_C(t) \to \infty$), we have
\begin{equation*}
1=\lim_{S_C(t) \to \infty} K(t) \Big (S_C(t)^{-1} + \big (\sigma_\eta^2 + \lambda K(t)\big ) \beta_C^2\Big ) = \lim_{S_C(t) \to \infty} K(t) \big (\sigma_\eta^2 + \lambda K(t)\big ) \beta_C^2.
\end{equation*}
Therefore, the limiting precision $\bar K$ must satisfy
\begin{equation*}
\lambda \beta_C^2 \big (\bar K\big )^2 + \sigma_\eta^2 \beta_C^2 \bar K - 1 = 0.
\end{equation*}
Solving for the unique positive root yields the expression in \Cref{eq:collapse}.
\end{proof}

\section{Proofs for \Cref{sec:mechanism}}\label{sec:mechanism appendix}

In this section, we prove \Cref{lem:IC,obs:linear pricing}, and \Cref{obs:two part}.

\subsection{Linear Pricing Fails on Participation (Proof of \Cref{obs:linear pricing})}

\begin{proof}[Proof of \Cref{lem:IC}]
For each creator $i=1,2,\ldots,N$, under the linear pricing rule (or the two-part tariff rule considered later in \Cref{obs:two part}), their payoff $U_i(\bm h;\bm P)$ only depends on $h_i$ and $p_i$ (and also $B_i$ if considering the two-part tariff rule defined in \Cref{eq:two-part tariff}), namely $U_i(\bm h;\bm P)=p_ih_i+B_i-C(h_i)$. Thus the first-order condition is simply
\begin{equation*}
p_i=ch_i.
\end{equation*}
Therefore, the best response is $h_i^{\text{sp}}$---thus satisfying Incentive Compatibility \eqref{eq:IC}---if and only if $p_i=ch_i^{\text{sp}}$. This gives the only linear $\bm P$ attaining IC, i.e., $P_i(\bm h)=ch_i^{\text{sp}}\times h_i$, $\forall i=1,2,\ldots,N$.
\end{proof}
\begin{proof}[Proof of \Cref{obs:linear pricing}]
According to \Cref{lem:IC}, the $\bm P$ is uniquely determined by \Cref{eq:IC}.
We further pin down $T$ using Budget Balance \eqref{eq:BB}. This gives
\begin{equation*}
P_i(\bm h)=ch_i^{\text{sp}}\times h_i,~\forall i=1,2,\ldots,N;\quad T(\bm h)=F + \sum_{i=1}^N p_i h_i.
\end{equation*}
This gives the unique $\langle \bm P,T\rangle$ ensuring \Cref{eq:IC,eq:BB}. We write the payoff of each creator:
\begin{equation*}
U_i(\bm h^{\text{sp}};\bm P)=p_ih_i^{\text{sp}}-\frac c2 (h_i^{\text{sp}})^2=\frac c2 (h_i^{\text{sp}})^2,
\end{equation*}
and also the profit of the buyer:
\begin{equation*}
\Pi(\bm h^{\text{sp}};T)=K(\bm h^{\text{sp}})-F-\sum_{i=1}^N p_ih_i^{\text{sp}}.
\end{equation*}
Since $p_i=ch_i^{\text{sp}}$ (from \Cref{lem:IC}) and the social planner solution ensures (recall \Cref{eq:FOC sp}) $\nabla K(\bm h^{\text{sp}})=c\bm h^{\text{sp}}$, we have
\begin{equation*}
\Pi(\bm h^{\text{sp}};T)=K(\bm h^{\text{sp}})-F-(c\bm h^{\text{sp}})'\bm h^{\text{sp}}=K(\bm h^{\text{sp}})-\big (\nabla K(\bm h^{\text{sp}})\big )'\bm h^{\text{sp}}-F.
\end{equation*}
In the special case where all creators are uncorrelated (i.e., $\beta_i=0$ for all $i$), we have $K(\bm h)=\sum_{i=1}^N h_i$ (recall \Cref{lem:precision common factor}).\footnote{Or more generally, in any {Constant Returns to Scale} (CRS) regime where $K(a \bm h)=a K(\bm h)$ for any scalar $a>0$.} Euler's Homogeneous Function Theorem gives
\begin{equation}\label{eq:Euler theorem zero mean}
K(\bm h) = \sum_{i=1}^N h_i \cdot \frac{\partial K}{\partial h_i}(\bm h)=(\nabla K(\bm h))'\bm h,\quad \forall \bm h\in \mathbb R_{\ge 0}^N.
\end{equation}
Therefore, the buyer's profit is $\Pi=-F\le 0$ in this case, which violates \Cref{eq:IR} whenever $F+\underline \Pi>0$.
\end{proof}

\subsection{Two-Part Tariff for Efficiency and Participation (Proof of \Cref{obs:two part})}

\begin{proof}[Proof of \Cref{obs:two part}]
From the specific choice of unit price $\bm p=c\bm h^{\text{sp}}$, Incentive Compatibility \eqref{eq:IC} holds due to \Cref{lem:IC}.
Since $\bm \phi=(\phi_1,\phi_2,\ldots,\phi_N)'$ sums to 1, at the equilibrium effort $\bm h^{\text{sp}}$ the total payment to the creators is (recall the $B_i$ defined in \Cref{eq:subsidy two part}):
\begin{equation*}\begin{aligned}
&\quad \sum_{i=1}^N P_i(\bm h^{\text{sp}})=\sum_{i=1}^N \big (p_ih_i^{\text{sp}}+B_i\big )\\
&=\sum_{i=1}^N \left [ch_i^{\text{sp}}\times h_i^{\text{sp}}+\phi_i\alpha \mathcal S(\bm h^{\text{sp}})+\underline u_i-\frac c2 (h_i^{\text{sp}})^2\right ]\\
&=\alpha \mathcal S(\bm h^{\text{sp}})+\sum_{i=1}^N \left [\underline u_i+\frac c2 (h_i^{\text{sp}})^2\right ].
\end{aligned}\end{equation*}
Meanwhile, the transfer received from the buyer (recall the $L$ defined in \Cref{eq:transfer two part}) ensures
\begin{equation*}
T(\bm h^{\text{sp}})=K(\bm h^{\text{sp}})-(1-\alpha)\mathcal S(\bm h^{\text{sp}})-\underline \Pi.
\end{equation*}

The surplus of the intermediary (income minus operational cost minus payments) then satisfies
\begin{equation*}\begin{aligned}
T(\bm h^{\text{sp}})-F-\sum_{i=1}^N P_i(\bm h^{\text{sp}})&=K(\bm h^{\text{sp}})-(1-\alpha)\cdot \mathcal S(\bm h^{\text{sp}})-\underline \Pi - F -\alpha \mathcal S(\bm h^{\text{sp}})-\sum_{i=1}^N \left [\underline u_i+C(h_i^{\text{sp}})\right ] = 0,
\end{aligned}\end{equation*}
by definition of $\mathcal S(\bm h)$. This proves Budget Balance \eqref{eq:BB} condition.
We finally verify the Individual Rationality \eqref{eq:IR} condition. For each creator $i=1,2,\ldots,N$:
\begin{equation*}\begin{aligned}
U_i(\bm h^{\text{sp}};\bm P) = P_i(\bm h^{\text{sp}})-C(h_i^{\text{sp}}) &= \left [ch_i^{\text{sp}}\times h_i^{\text{sp}}+\phi_i\alpha \mathcal S(\bm h^{\text{sp}})+\underline u_i-\frac c2 (h_i^{\text{sp}})^2\right ] - C(h_i^{\text{sp}}) \\
&=\underline u_i+\phi_i \alpha \mathcal S(\bm h^{\text{sp}})\ge \underline u_i,
\end{aligned}\end{equation*}
where the last step is because $\mathcal S(\bm h^{\text{sp}})> 0$ (the condition of \Cref{obs:two part}).
The buyer's profit:
\begin{equation*}\begin{aligned}
\Pi(\bm h^{\text{sp}};T) = K(\bm h^{\text{sp}})-T(\bm h^{\text{sp}}) &= K(\bm h^{\text{sp}}) - \big( K(\bm h^{\text{sp}})-(1-\alpha) \mathcal S(\bm h^{\text{sp}})-\underline \Pi \big) \\
&=\underline \Pi + (1-\alpha) \mathcal S(\bm h^{\text{sp}})\ge \underline \Pi,
\end{aligned}\end{equation*}
where we again used the condition that $\mathcal S(\bm h^{\text{sp}})> 0$.
\end{proof}

\end{document}